\documentclass{amsart}

\usepackage{amsfonts}
\usepackage{amssymb}
\usepackage{amsmath, amsthm}
\usepackage{graphicx}
\usepackage[all]{xy}
\usepackage{color}
\usepackage{xcolor}
\usepackage{bm}
\usepackage{float}
\usepackage{enumitem}
\usepackage{tabularx}
\usepackage{multirow}
\usepackage{multicol}
\usepackage{array}
\newcolumntype{Z}{>{\centering\let\newline\\\arraybackslash\hspace{0pt}}X}

\usepackage{csquotes}
\tolerance=10000
\usepackage{float}
\usepackage{subcaption}

\DeclareMathOperator{\proj}{proj}

\definecolor{e-mail}{rgb}{0,.40,.80}
\definecolor{reference}{rgb}{.20,.60,.22}
\definecolor{citation}{rgb}{0,.40,.80}

\definecolor{burntorange}{rgb}{0.8, 0.33, 0.0}
\definecolor{darkspringgreen}{rgb}{0.13, 0.55, 0.35}

\RequirePackage{verbatim}

\usepackage[colorlinks, linkcolor=reference,
 citecolor=citation,
         urlcolor=e-mail]{hyperref}

\RequirePackage{verbatim}

\RequirePackage{hyperref}

\RequirePackage{color}

\definecolor{conditional}{rgb}{0,1,0}
\definecolor{e-mail}{rgb}{0,.40,.80}
\definecolor{reference}{rgb}{.20,.60,.22}
\definecolor{mrnumber}{rgb}{.80,.40,0}
\definecolor{citation}{rgb}{.20,.60,.22}

\tolerance=10000
\usepackage{bbm}

\usepackage{tikz}
\usetikzlibrary{arrows,decorations.markings}
\usepackage{url}

\newtheorem{theorem}{Theorem}
\newtheorem{proposition}{Proposition}

\newtheorem{lemma}{Lemma}
\newtheorem{corollary}{Corollary}

\theoremstyle{definition}

\newtheorem{remark}{Remark}
\newtheorem{definition}{Definition}
\newtheorem{example}{Example}

\DeclareMathOperator{\init}{in}

\DeclareMathOperator{\Wr}{Wr}

\DeclareMathOperator{\lead}{\text{lead}}

\setlist[enumerate]{leftmargin=.25in}
\setlist[itemize]{leftmargin=.25in}

\title{Algebraic identifiability of partial differential equation models}

\author[Byrne]{Helen M. Byrne}
\thanks{Byrne and Harrington are members of the Centre for Topological Data Analysis, which is funded by the EPSRC grant `New Approaches to Data Science: Application Driven Topological Data Analysis’ {\texttt{EP/R018472/1}}. This work was also partially supported by the NSF grants CCF-2212460, DMS-1760448, DMS-1853650, and DMS-1853482 and PSC-CUNY grant \#65605-00 53, a Royal Society University Research Fellowship, Renewal and Enhancement grant. For the purpose of Open Access, the authors have applied a CC BY public copyright licence to any  Author Accepted Manuscript (AAM) version arising from this submission.}
\address{Mathematical Institute,
Radcliffe Observatory Quarter,
Woodstock Road,
Oxford OX2 6GG,
UK and\\
Ludwig Institute for Cancer Research, Oxford, 
Nuffield Department of Clinical Medicine, 
Old Road Campus Research Building,
off Roosevelt Drive, Headington, Oxford, OX3 7DQ, UK (helen.byrne@maths.ox.ac.uk)} 
\author[Harrington]{
Heather A. Harrington}
\thanks{Harrington is supported by a Royal Society University Research Fellowship, URF Renewal, URF Enhancement grant and EPSRC \texttt{EP/R018472/1}.} 
\address{Mathematical Institute,
Radcliffe Observatory Quarter,
Woodstock Road,
Oxford OX2 6GG,
UK and\\
Faculty of Mathematics, Technische Universitat Dresden, 01062 Dresden, Germany and\\
Centre for Systems Biology Dresden (CSBD), 
Pfotenhauerstrasse 108, 
01307 Dresden, Germany and\\
Max Planck Institute of Molecular Cell Biology and Genetics (MPI-CBG), 
Pfotenhauerstrasse 108, 
01307 Dresden, Germany (harrington@mpi-cbg.de)} 
\author[Ovchinnikov]{
Alexey Ovchinnikov}
\thanks{The authors are grateful for the stimulating research environment at MPI-MiS Leipzig. The work started at the MPI-MiS Workshop on Differential Algebra.}
\address{Department of Mathematics, CUNY Queens College and Ph.D. Programs in Mathematics and Computer Science, CUNY Graduate Center, New York, USA (aovchinnikov@qc.cuny.edu)} 
\author[Pogudin]{Gleb Pogudin}\address{LIX, CNRS, \'Ecole Polytechnique, Institute Polytechnique de Paris, 1 rue Honor\'e d'Estienne d'Orves, 91120, Palaiseau, France (gleb.pogudin@polytechnique.edu)}
\author[Rahkooy]{Hamid Rahkooy}\address{Mathematical Institute,
Radcliffe Observatory Quarter,
Woodstock Road,
Oxford OX2 6GG,
UK (rahkooy@maths.ox.ac.uk)}
\author[Soto]{Pedro Soto}\address{Wellcome Centre for Human Genetics,
 Roosevelt Drive, Oxford OX3 7BN
UK (soto@maths.ox.ac.uk)}

\usepackage{footnote}
\usepackage{amsmath}
\usepackage{amssymb}
\usepackage{graphicx}
\usepackage{xcolor}
\usepackage{physics}
\usepackage{pdfpages}
\usepackage{algorithm}
\usepackage{algorithmic}
\usepackage{physics}

\newcommand{\bv}{\mathbf{v}}
\newcommand{\bw}{\mathbf{w}}
\newcommand{\by}{\mathbf{y}} 
\newcommand{\bk}{\mathbf{k}}  
\newcommand{\bg}{\mathbf{g}}  
\newcommand{\bff}{\mathbf{f}}

\newcommand{\RR}{\mathbb{R} }

\newcommand{\CC}{\mathbb{C} }

\usepackage{biblatex} 
\addbibresource{main.bib} 

\date{}
\begin{document}
\maketitle

\begin{abstract}
 Differential equation models are crucial to scientific processes. The values of model parameters are important for analyzing the behaviour of solutions. A parameter is called globally identifiable if its value can be uniquely determined from the input and output functions. To determine if a parameter estimation problem is well-posed for a given model, one must check if the model parameters are globally identifiable. This problem has been intensively studied for ordinary differential equation models, with theory and several efficient algorithms and software packages developed. A comprehensive theory of algebraic identifiability for PDEs has hitherto not been developed due to the complexity of initial and boundary conditions.  Here, we provide theory and algorithms, based on differential algebra, for testing identifiability of polynomial PDE models. We showcase this approach on PDE models arising in the sciences.
\end{abstract}

\section{Introduction}

Differential equations form the bedrock of numerous scientific and engineering models, particularly in the realms of biological and chemical interactions. Indeed, systems of ordinary and partial differential equations are integral to our understanding of such scientific phenomena. These models invariably incorporate time-varying dependent variables, input functions, and output functions, along with parameter vectors. 
Without loss of generality, parameters are positive scalars, which are independent of time, and whose values are often unknown. Parameter estimation or inference is determining the unknown parameter values from observations. Identifiability is a necessary condition for well-posed parameter estimation and inference.

There are different notions of parameter identifiability for differential equation models~\cite{Hong-GlobalIO2018}.  A parameter is called \emph{globally structurally identifiable} if its value can be uniquely determined from the input and output functions.
If a parameter takes finitely many values, it is called locally identifiable. A parameter that is neither globally, nor
 locally identifiable, is called unidentifiable. In this paper, we only
 consider global identifiability and, for brevity,  use the notation identifiable for
 global identifiability.
Given a model with 
observables, the \emph{parameter identifiability problem} is determining whether the parameters are identifiable.

For ordinary differential equation (ODE) models, the parameter identifiability problem has been extensively studied. Various approaches have emerged
in areas ranging from
control theory and dynamical systems (e.g.,  Taylor series approximations)
to
computational algebraic geometry and differential algebra.
Starting from Ritt~\cite{Ritt-book}, several theoretical results and algorithms have been developed within the last decades. 
The Rosenfeld-Gr\"obner algorithm and Gr\"obner bases are the two crucial concepts that underpin existing identifiability algorithms.
Based on these,  several software have been designed to test identifiability, e.g., DAISY~\cite{DAISY}, SIAN~\cite{SIAN}, 
COMBOS~\cite{COMBOS},  Structural Identifiability Toolbox~\cite{Ilmer2021}, and StructuralIdentifiability.jl~\cite{structidjl}.
An increase in the availability of spatio-temporal data enables the investigation of parameter values in PDE models. 

A key aim of this work is to adapt and extend the successful algebraic approaches for studying ODE identifiability to spatial systems. PDEs are inherently more complex than ODEs. 
There are additional challenges associated with the boundary conditions. 
With the exception of the work by \cite{Zhu2018,Renardy2022,ciocanel2023parameter,browning2023structural},  
there is no systematic study of PDE identifiability, to the best of our knowledge. 
Authors in~\cite{Zhu2018} investigate the identifiability and estimation of parameters of a chikungunya epidemic transmission model.
Structural identifiability of age-structured PDE models using a
differential algebra framework has been studied In~\cite{Renardy2022}.
Structural and practical identifiability of PDE models of
fluorescence recovery after photobleaching has been studied in~\cite{ciocanel2023parameter}.
In~\cite{browning2023structural}, the authors present a differential algebra approach to 
structural identifiability analysis on partially observed linear reaction-advection-diffusion PDE models.

We extend the identifiability problem to spatio-temporal models 
\begin{equation}\label{eq:sigma}
\Sigma =
  \begin{cases}
    \partial_t \bv=\bff\left(\bk, \bw, \bv, \partial_{x} \bv, \ldots, \partial_{x}^h \bv\right) \\
    \by=\bg\left(\bk, \bw, \bv, \partial_{x} \bv, \ldots,
      \partial_{x}^h \bv\right),
  \end{cases}
\end{equation}
which have broad applications in applied mathematics. 
Here $\bw(x,t)$, $\bv(x,t)$, $\by(x,t)$, and $\bk$ are vectors of inputs, state variables, outputs, and constant parameters, respectively and $\bff$ and $\bg$ are vectors of rational functions.
We present an algebraic approach for the PDE identifiability problem, focusing on models arising in applied mathematics. 
Our main results state that for a given model of the form System~\eqref{eq:sigma}, one can construct certain differential polynomials equations called Input Output (IO) equations such that the identifiability of the parameters can be obtained from the identifiability of the coefficients of the IO-equations (Theorem~\ref{thm:identifiability}), and that the coefficients of the IO-equations are identifiable if their Wronskian is nonsingular (Proposition~\ref{prop:identicbc}).
Based on our results, we present two algorithms (for IO-identifiability and strong identifiability), with implementation in {\sc Maple} for our illustrating examples, that provide a sufficient condition for solving the PDE identifiability problem. 
The main steps of our algorithms are the following:
\begin{itemize}
\item calculating input-output (IO) equations of the PDE model (Step 1 in Algorithms~\ref{alg:algebraic} and~\ref{alg:strong}), 
\item checking if the parameters can be uniquely determined from the coefficients of the IO-equations (Step 3 in Algorithm~\ref{alg:algebraic}, Steps 2 \& 3 in Algorithm~\ref{alg:strong}), and 
\item verifying that the coefficients of the IO-equations are themselves identifiable (Step 3 in Algorithm~\ref{alg:algebraic}, Step 4 in Algorithm~\ref{alg:strong}).
\end{itemize}

  In the ODE case, there is a key subtlety: even for generic initial conditions, the coefficients of the IO-equations are not always identifiable, see e.g. \cite[Example~2.14]{Hong-GlobalIO2018}.  Surprisingly, this subtlety does not occur in PDEs with generic initial/boundary conditions, as we prove in Theorem~\ref{thm:identifiability} and exploit in Algorithm~\ref{alg:algebraic}.
  
 Initial and boundary conditions in practical examples are not necessarily generic (cf. the ODE case \cite{SADA2003,GR2021}), which adds an additional layer of difficulty. To account for these conditions, the last step of our method finds potential linear dependencies between the monomials present in the input-output equations corresponding to the PDE models. The linear dependencies  are  then tested by computing the Wronskian of the monomials   and using differential algebra tools to 
determine if the Wronskian is non-singular. 
Using the Rosenfeld-Gr\"obner algorithm~\cite{Boulier2009computing}, we compute the normal form of the determinant of the Wronskian with respect to the differential ideal 
 of the model.
We then use the initial and boundary conditions of the model in order to refute the 
 vanishing of the determinant (see Proposition~\ref{prop:identicbc}). 

We demonstrate these results in 
Algorithm~\ref{alg:strong}, for testing the identifiability of standard models arising in applied mathematics. We consider different types of PDEs (parabolic, elliptic and hyperbolic), with a particular focus on parabolic PDEs that arise in mathematical biology.
We show that a scalar reaction-diffusion equation, Fisher's equation,
the coupled reaction-diffusion equations system, and a
reaction-diffusion system~\cite{murray-book1,gatenby1996reaction} are
all identifiable. We also demonstrate the wider applicability of this
framework on Laplace's equation (elliptic) and the wave equation
(hyperbolic).
    In Example~\ref{ex:single-output-LV}, we show that following our symbolic-computation based algorithm directly could be too demanding on the computational resources. We demonstrate how numeric computation with random values of parameters 
  gives evidence for identifiability at a generic point in reasonable computing time.

The organization of the paper is as follows. Section~\ref{sec:prelimanaries} outlines the preliminaries on differential algebra, detailing precise definitions and required results. Section~\ref{sec:wronskian} presents our results, offering the theoretical foundation for our PDE identifiability procedure using the Wronskian and generalizing the literature on ODE identifiability. Section~\ref{sec:algorithm} provides our two identifiability algorithms. In Section~\ref{sec:computations}, we showcase Algorithm~\ref{alg:strong} on the above suite of PDE models arising in mathematical biology.

\section{Differential equations to  differential algebra}\label{sec:prelimanaries}

We start the preliminaries by recalling differential polynomials, which provide a general framework for polynomial PDEs.

\begin{definition}[Ring of differential polynomials]
\begin{enumerate}
  \item[]
  \item A {\em differential ring} $(R,\Delta)$ is a commutative ring with a set $\Delta = \{\partial_1,\ldots,\partial_m\}$ of pairwise-commuting derivations $\partial_i:R\to R$, that is, maps such that, for all $a,b\in R$, $\partial_i(a+b) = \partial_i(a) + \partial_i(b)$ and $\partial_i(ab) = 
  \partial_i(a)b + a \partial_i(b)$. 
  \item A differential ring that is a field is called a {\em differential field}. 
  \item For a differential field $K$, the {\em ring of differential polynomials} in the variables $x_1,\ldots,x_n$ over a differential field $K$ is the polynomial ring in infinitely many variables 
  \[
    K[\partial_1^{n_1}\cdots\partial_m^{n_m}x_j\mid n_i\geqslant 0,\, 1\leqslant j\leqslant n]
  \] 
  with the derivations 
 extended from $K$ by
  \[\partial_i\left(\partial_1^{n_1}\cdots\partial_i^{n_i}\cdots\partial_m^{n_m}(x_j)\right) := \partial_1^{n_1}\cdots\partial_i^{n_i+1}\cdots\partial_m^{n_m}(x_j).
  \] 
  This differential ring is denoted by $K\{x_1,\ldots,x_n\}$.  
\end{enumerate}  
\end{definition}

\begin{definition}[Strong identifiability] \label{def:idinbc} 
Let $\mathbb{K}$ be one of the fields $\mathbb{R}$ and $\mathbb{C}$.
We fix a domain $\mathcal{D}$ in $\mathbb{K}^m$ on which a PDE system will be defined.
We will consider a PDE system in $n$ variables $\bv = (v_1, \ldots, v_n)$ and, for each $1 \leqslant i \leqslant n$, we fix a class $\mathcal{C}_i$ of function in $\mathcal{D}$ where the solutions for $v
_i$ will be sought.
The requirements on functions from $\mathcal{C}_i$ may involve, for example, regularity conditions (e.g., twice or infinitely differentiable) or boundary conditions on $\partial\mathcal{D}$.

Now we consider
a system of PDEs in $\mathcal{D}$ of the form
\begin{equation}\label{eq:PDEarb}
\mathbf{F}\left(\bk, \bv\right) = 0,
\end{equation}
where $\mathbf{F} = (F_1, \ldots, F_s)$ and $F_1, \ldots, F_s \in \mathbb{K}(\bk)\{\bv\}$ are differential polynomials in $\bv$ with coefficients that are rational functions in the scalar parameters $\bk=(k_1,\ldots,k_\ell)$. Let us fix a domain $\Omega \subset \mathbb{K}^\ell$, which will play a role of domain for the parameters (in our examples, if $\Omega$ is not specified, we will assume that $\Omega = \mathbb{K}^\ell$).
Furthermore, we fix outputs $\by = (y_1, \ldots, y_r)$ defined by given formulas
\[
y_i = \frac{G_{i}(\bk, \bv)}{Q(\bk, \bv)},\quad 1 \leqslant i \leqslant r,
\]
where $G_1, \ldots, G_r, H \in \mathbb{K}(\bk)\{\bv\}$. 
We will say that a rational function $h(\bk) \in \mathbb{K}(\bk)$ is \emph{identifiable} if, for all $\bk_1, \bk_2 \in \Omega$ and all $\bv_1, \bv_2 \in \mathcal{C}_1\times \ldots \times \mathcal{C}_n$ such that $\mathbf{F}(\bk_1, \bv_1) = \mathbf{F}(\bk_2, \bv_2) = 0$, we have
\[
  \mathbf{y}(\bk_1, \bv_1) = \mathbf{y}(\bk_2, \bv_2) \implies h(\bk_1) = h(\bk_2).
\]
\end{definition}

\begin{remark}
While the definition above is very general and stated in natural analytic terms, some of its properties make it challenging to use it in practice:
First, since it allows for arbitrary function classes and arbitrary polynomial PDEs, a complete constructive approach to verifying this property seems to be out of reach at the moment. 
Nevertheless, we will show that strong identifiability can be established in a variety of practical cases by a uniform approach (see Section~\ref{sec:computations}).

Second, a system that is ``almost always'' identifiable can become nonidentifiable according to this definition because of some special degenerate cases.
For example, a parameter $k$ in the ODE model $x'(t) = kx(t)$ with output $y(t) = x(t)$ will be considered nonidentifiable because the zero solution $x(t) = 0$ does not distinguish between different parameter values.
However, as long as $x(t)$ is nonzero, the value of $k$ is uniquely determined.
The standard mitigation of this issue in the ODE case is to restrict the discussion to generic solutions. 
The problem is that because of significantly more involved conditions on the existence and uniqueness of solutions of PDEs, the topology (and thus the notion of genericity) of the solution space of a PDE system may be much more involved even if the function classes $\mathcal{C}_1, \ldots, \mathcal{C}_n$ are infinitely differentiable functions subject to some boundary conditions.

Therefore, to obtain more refined results in a more restricted context, we will give another definition, which is a direct analogue of the notion of algebraic identifiability in the ODE case~\cite[Section~2.2]{ANSTETTCOLLIN2020139}. 
Note that, in the ODE case, this definition is equivalent to the analytic one~\cite[Proposition~3.4]{Hong-GlobalIO2018}, and we expect a similar equivalence result to hold in the PDE case (perhaps in the class of power series or analytic solutions).
\end{remark}

We will recall some relevant notions from differential algebra.

\begin{definition}[Differential ideals]
    An ideal $I$ of a differential ring $(R,\Delta)$ is called a {\em differential ideal} if, for all $a \in I$ and $\partial\in\Delta$, $\partial(a)\in I$. For $F\subset R$, the smallest differential ideal containing the set $F$ is denoted by $[F]$.

    For an ideal $I$ and set $S$ in a ring $R$, we denote $S^\infty$ to be the multiplicatively closed subset of $R$ generated by $S$ and \[I \colon S^\infty = \{r \in R \mid \exists a \in S^\infty\,\colon a r \in I\}.\]
  The set $I:S^\infty$ is also an ideal in $R$. If $S = \{a\}$ for some $a \in R$, we also denote $I:S^\infty$ by $I:a^\infty$. 
\end{definition}

As mentioned in the introduction, we will mostly focus on evolutionary PDEs.
More precisely, we will have one distinguished derivation $\partial_t$ with respect to the time and one or several spatial derivations $\partial_1, \ldots, \partial_m$, and we will consider systems of the form
\begin{equation}\label{eq:sigma_def}
\Sigma =
  \begin{cases}
    \partial_t \bv=\frac{\mathbf{f}(\bk, \bv, \bw)}{Q(\bk, \bv, \bw)} \\
    \by=\frac{\mathbf{g}(\bk, \bv, \bw)}{Q(\bk, \bv, \bw)},
  \end{cases}
\end{equation}
where $\bv$ and $\bw$ are the state and input variables of the model, $\by$ are outputs, $\bk$ are scalar parameters, and $\mathbf{f}, \mathbf{g}, Q$ are differential polynomials in $\bv, \bw$ with coefficients in $\mathbb{C}(\bk)$ of order zero with respect to $\partial_t$.
Note that we have focused on
complex numbers to unlock powerful algebraic tools we will use, and identifiability over complex numbers implies identifiability over reals.
Now we will give a formal definition of what generic solution of~\eqref{eq:sigma_def} is and then give our second definition of identifiability.

\begin{definition}[Generic solution]\label{def:generic_solution}
    Given $\Sigma$ as in~\eqref{eq:sigma_def}, we define the differential ideal of $\Sigma$ as 
    \[
    I_\Sigma=[Q\partial_t\bv - \bff, Q\by - \bg]:Q^\infty \subset \CC(\bk)\{\bv,\by,\bw\}.
    \] 
    In the same way as \cite[Lemma 3.1 and 3.2]{Hong-GlobalIO2018}, one can
     show that $I_\Sigma$ is a prime ideal and
     \begin{equation}
     I_\Sigma \cap \CC(\bk) \{ \bw \}\left[\partial_1^{i_1}\ldots \partial_m^{i_m}\bv,\ i_1, \ldots, i_m\geqslant 0\right]=\{0\} \label{eq:diff_ideal}.
\end{equation}
Since $I_\Sigma$ is prime, we can consider the field of fractions of $R / I_\Sigma$,
which we denote by $\mathcal{F}$. We will denote the images of
$\bv, \by, \bw$ in $\mathcal{F}$ by
$\hat{\bv}, \hat{\by}, \hat{\bw}$, respectively.
We will call them {\em the generic solutions}
of~\eqref{eq:sigma_def}. 
\end{definition}

\begin{definition}[Identifiability]\label{def:identifiability}
In the notation of the previous definition, we say that a parameter
$k_i \in \bk$ (or, more generally, a rational function of
parameters) is {\em identifiable} if
\begin{equation*}
  k_i \in \CC\langle \hat{\by},  \hat{\bw} \rangle,
\end{equation*}
where $\CC\langle \hat{\by},  \hat{\bw} \rangle$ is the smallest field extension of $\CC$ containing $\hat{\by},  \hat{\bw}$ and their derivatives.
\end{definition}

\section{Results}\label{sec:wronskian}
\subsection{Identifiability for generic solutions}

We will prove that, unlike the ODE case, all of the identifiable functions of parameters can be read off from particular relations between the input and output variables called input-output equations. 
Below we recall some necessary notions and constructions from constructive differential algebra.

\begin{definition}[Differential rankings and characteristic sets]
\begin{enumerate}
\item[] 
 \item
  A {\em differential ranking} on $K\{x_1,\ldots,x_n\}$ is a total order $>$ on \[X := \{\partial_1^{n_1}\cdots\partial_n^{n_m}(x_j)\mid n_i\geqslant 0,\, 1\leqslant j\leqslant n\}\] satisfying:
  \begin{itemize}
    \item for all $x \in X$ and $\partial\in\Delta$, $\partial(x) > x$ and
    \item for all $x, y \in X$ and $\partial\in\Delta$, if $x >y$, then $\partial(x) > \partial(y)$.
  \end{itemize}
  It can be shown that a differential ranking on $K\{x_1,\ldots,x_n\}$ is always well-ordered.

\item
  For  $f \in K\{x_1,\ldots,x_n\} \backslash K$ and differential ranking $>$,
\begin{itemize}
    \item 
    $\lead(f)$ is the element of $X$ appearing in $f$ that is maximal with respect to $>$.
    \item 
    The \emph{leading coefficient} of $f$ considered as a polynomial in $\lead(f)$ is denoted by $\init(f)$ and called the \emph{initial} of $f$. 
    \item
    The \emph{separant} of $f$ is $\frac{\partial f}{\partial\lead(f)}$, the partial derivative of $f$ with respect to $\lead(f)$.
    \item 
    The \emph{rank} of $f$ is $\rank(f) = \lead(f)^{\deg_{\lead(f)}f}$.
    \item 
    For $S \subset K\{x_1,\ldots,x_n\} \backslash K$, the set of \emph{initials} and \emph{separants} of $S$ is denoted by $H_S$.
    \item 
    For $g \in K\{x_1,\ldots,x_n\} \backslash K$, say that $f < g$ if $\lead(f) < \lead(g)$ or $\lead(f) = \lead(g)$ and $\deg_{\lead(f)}f < \deg_{\lead(g)}g$.
\end{itemize}
    \item For $f, g \in K\{x_1,\ldots,x_n\} \backslash K$, $f$ is said to be reduced w.r.t. $g$ if no proper derivative of $\lead(g)$ appears in $f$ and $\deg_{\lead(g)}f <\deg_{\lead(g)}g$.
    \item 
    A subset $\mathcal{A}\subset K\{x_1,\ldots,x_n\} \backslash K$
    is called {\em autoreduced} if, for all $p \in \mathcal{A}$, $p$ is reduced w.r.t. every  element of $\mathcal A\setminus \{p\}$. 
    One can show that every autoreduced set is always finite.
    \item Let $\mathcal{A} = \{A_1, \ldots, A_r\}$ and $\mathcal{B} = \{B_1, \ldots, B_s\}$ be autoreduced sets such that $A_1 < \ldots < A_r$ and $B_1 < \ldots < B_s$. 
    We say that $\mathcal{A} < \mathcal{B}$ if
    \begin{itemize}
      \item $r > s$ and $\rank(A_i)=\rank(B_i)$, $1\leqslant i\leqslant s$, or
      \item there exists $q$ such that $\rank(A_q) <\rank(B_q)$ and, for all $i$, $1\leqslant i< q$, $\rank(A_i)=\rank(B_i)$.
    \end{itemize}
    \item An autoreduced subset of the smallest rank of a differential ideal $I\subset K\{x_1,\ldots,x_n\}$
    is called a {\em characteristic set} of $I$. One can show that every non-zero differential ideal in $K\{x_1,\ldots,x_n\}$ has a characteristic set. Note that a characteristic set does not necessarily generate the ideal.
    \item For elements $r_1,\ldots,r_n$ is a differential ring $(R,\Delta)$, the Wronskian matrix $\Wr(r_1,\ldots,r_n)$ with respect to a given $\partial\in \Delta$ is the matrix
    \[
    \begin{pmatrix}
    r_1 & \ldots&r_n\\
    \partial r_1 &\ldots &\partial r_n\\
    \vdots & \ddots &\vdots\\
    \partial^{n-1} r_1 &\ldots &\partial^{n-1} r_n\end{pmatrix}.
    \]
\end{enumerate}
\end{definition}

\begin{definition}[IO-equations]\label{def:IO-equations}
Given a differential ranking on the differential variables $\by$ and $\bw$, the \emph{IO-equations} are defined as the monic characteristic presentation of the prime differential ideal $I_\Sigma \cap \CC(\bk)\{\by,\bw\}$ with respect to this ranking  (see \cite[Definition~6 and Section~5.2]{Ovchinnikov-Identifiability2023} for more details). For each differential ranking, such a monic characteristic presentation is unique~\cite[Theorem~3]{Boulier2000}.
\end{definition}

Now we are ready to state the main result of this section, namely, that all the identifiable functions can be read off the coefficients of the IO-equations. 
Interestingly, the corresponding statement is not true for ODEs \cite[Example~2]{Ovchinnikov-Identifiability2023}.

\begin{theorem}\emph{(cf. \cite[Theorems 1 \& 2]{Ovchinnikov-Identifiability2023})}\label{thm:identifiability}
For a model $\Sigma$ of the form~\eqref{eq:sigma_def}, the identifiable functions in $\CC(\bk)$ form a subfield, and this subfield is generated by the coefficients of any set of IO-equations of the model.
\end{theorem}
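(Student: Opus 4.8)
The plan is to establish two containments: (1) every coefficient of an IO-equation is an identifiable function in $\CC(\bk)$, and (2) every identifiable function lies in the field generated by these coefficients. The fact that identifiable functions form a subfield of $\CC(\bk)$ is immediate from Definition~\ref{def:identifiability}, since $\CC\langle \hat{\by},\hat{\bw}\rangle$ is a field and membership in it is closed under the field operations; so the content is the description of this subfield.

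For containment (1), I would first recall that the IO-equations, being the monic characteristic presentation of the prime ideal $I_\Sigma \cap \CC(\bk)\{\by,\bw\}$, have coefficients that are rational functions of $\bk$; writing $\mathbf{c} = (c_1, c_2, \ldots)$ for the tuple of these coefficients, the key claim is that each $c_j \in \CC\langle \hat{\by}, \hat{\bw}\rangle$. The argument should mirror \cite[Theorems 1 \& 2]{Ovchinnikov-Identifiability2023}: the IO-equations, after substituting the generic solution, vanish identically in $\mathcal{F}$, and one extracts the coefficients $c_j$ as the solution of a linear system whose entries are differential polynomials in $\hat{\by}, \hat{\bw}$ — this is exactly where the monic (normalized) characteristic presentation is essential, since its uniqueness \cite[Theorem~3]{Boulier2000} guarantees that this linear system has a unique solution, forcing $c_j \in \CC\langle \hat{\by},\hat{\bw}\rangle$. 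Concretely, one picks enough distinct differential specializations (i.e., applies enough monomials in $\partial_t,\partial_1,\ldots,\partial_m$) to the IO-equation to get a full-rank linear system in the $c_j$ with coefficients in $\CC\langle\hat{\by},\hat{\bw}\rangle$; the nonvanishing of the relevant Wronskian-type determinant of the monomials appearing in the IO-equation (which is generically nonzero, using~\eqref{eq:diff_ideal}) secures invertibility. Hence each $c_j$ is identifiable.

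For containment (2) — that the $c_j$ generate \emph{all} identifiable functions — I would argue via the tower of fields $\CC\langle \hat{\by},\hat{\bw}\rangle \subseteq \CC(\mathbf{c})\langle\hat{\by},\hat{\bw}\rangle \subseteq \mathcal{F}$, aiming to show that any $h(\bk)$ identifiable, i.e. lying in $\CC\langle\hat{\by},\hat{\bw}\rangle$, must already lie in $\CC(\mathbf{c})$. The heart of the matter is the following: the extension $\mathcal{F} / \CC(\mathbf{c})\langle\hat{\by},\hat{\bw}\rangle$ should be shown to contain no new constants/parameters beyond $\CC(\mathbf{c})$, i.e. $\CC(\bk) \cap \CC(\mathbf{c})\langle\hat{\by},\hat{\bw}\rangle = \CC(\mathbf{c})$. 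This is where the PDE setting diverges sharply from the ODE case: in the ODE case, the generic solution's orbit under parameter changes can be larger, but here property~\eqref{eq:diff_ideal} — which says the state variables and their spatial derivatives are differentially independent over $\CC(\bk)\{\bw\}$ — means the generic solution is "as transcendental as possible," so specializing $\bk$ along a fiber cannot be detected by $\hat{\by},\hat{\bw}$ beyond what $\mathbf{c}$ records. Formally, I would let $\bk_1, \bk_2 \in \Omega$ be two parameter values with the same IO-equation coefficients $\mathbf{c}(\bk_1) = \mathbf{c}(\bk_2)$, construct from~\eqref{eq:diff_ideal} generic solutions for each realizing the same $\hat{\by},\hat{\bw}$ (the IO-equations being a complete description of the $(\by,\bw)$-behavior), and conclude $h(\bk_1) = h(\bk_2)$ for any identifiable $h$; then a standard specialization/field-theory argument (as in \cite[Theorem~2]{Ovchinnikov-Identifiability2023}) upgrades this to $h \in \CC(\mathbf{c})$.

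The main obstacle I anticipate is containment (2), specifically justifying that two parameter vectors with matching IO-coefficients genuinely give rise to generic solutions with identical output/input tuples $\hat{\by},\hat{\bw}$ inside a common differential field. In the ODE case the analogous step is delicate precisely because the IO-equations' coefficients need not be identifiable (the cited \cite[Example~2]{Ovchinnikov-Identifiability2023}), and the resolution there involves subtle arguments about the ideal generated by the IO-equations versus $I_\Sigma \cap \CC(\bk)\{\by,\bw\}$. The claim of the theorem is that in the PDE case this pathology disappears, and the reason must be traced to~\eqref{eq:diff_ideal}: the abundance of independent spatial derivatives $\partial_1^{i_1}\cdots\partial_m^{i_m}\bv$ means the ideal $I_\Sigma \cap \CC(\bk)\{\by,\bw\}$ is "thick enough" that its characteristic presentation captures the full field of definition. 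Pinning down exactly how~\eqref{eq:diff_ideal} forces $\CC(\mathbf{c}) = \CC(\bk) \cap \CC\langle\hat{\by},\hat{\bw}\rangle$ — as opposed to a proper subfield, as can happen for ODEs — is the crux, and I expect it to require a careful dimension/transcendence-degree count comparing $\mathcal{F}$ over $\CC(\mathbf{c})\langle\hat{\by},\hat{\bw}\rangle$ against the number of independent spatial derivations available.
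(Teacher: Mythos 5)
Your two-containment skeleton matches the paper's, and your Wronskian-plus-Cramer idea for showing the IO-coefficients are identifiable is the right mechanism, but there are two genuine gaps. First, you have located the PDE-versus-ODE divergence in the wrong containment. The direction ``identifiable $\Rightarrow$ in $\CC(C)$'' already holds for ODEs (it is \cite[Theorem~1]{Ovchinnikov-Identifiability2023}); what fails for ODEs is the \emph{converse}, that the IO-coefficients are themselves identifiable. The paper's key new ingredient is Lemma~\ref{lem:const}: any element of $\mathcal{F}$ killed by $\partial_x$ already lies in $\CC(\bk)$, because with respect to $\partial_x$ the relevant subfield of $\mathcal{F}$ is a field of differential rational functions in the free variables $\partial_x^i\bv$ (this is where~\eqref{eq:diff_ideal} is used). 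Without this, the singularity of the Wronskian only produces a linear dependence over $\partial_x$-constants of $\mathcal{F}$, which in the ODE case can be a strictly larger field than $\CC(\bk)$ and need not yield an element of $I_\Sigma\cap\CC(\bk)\{\by,\bw\}$ of smaller support. Your appeal to ``uniqueness of the monic characteristic presentation'' does not secure invertibility of the Wronskian; what is needed is (i) the constants lemma and (ii) a minimal-support hypothesis on the polynomial being analyzed (Lemma~\ref{lem:support}), which the paper arranges not for the IO-equations directly but for an autoreduced $\CC(\bk)$-vector-space basis of $J=I_\Sigma\cap\CC(\bk)\{\by,\bw\}$, combined with Corollary~\ref{cor:generateIO} identifying the field of definition of $J$ with $\CC(C)$.

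Second, your argument for containment (2) is a sketch that does not close. The claim that two parameter vectors with equal IO-coefficients ``give rise to generic solutions realizing the same $\hat{\by},\hat{\bw}$'' is precisely the point that needs proof, and a transcendence-degree count will not deliver it. The paper instead argues algebraically: an identifiable $c$ satisfies $gc+h\in I_\Sigma$ with $g\notin I_\Sigma$, the field of definition of $J$ over $\CC$ equals $\CC(C)$ (Lemma~\ref{lem:fieldofdef} and Corollary~\ref{cor:generateIO}), and if $c\notin\CC(C)$ one takes an automorphism $\sigma$ of $\overline{\CC(\bk)}$ fixing $\CC(C)$ but moving $c$, applies it to the membership relation, and uses primality of $\overline{I_\Sigma}$ to force $g\in I_\Sigma$, a contradiction. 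You would need to supply this (or an equivalent) descent argument to make your containment (2) rigorous.
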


\begin{remark}\label{rmk:injectivity}   
Note that using Theorem~\ref{thm:identifiability}, 
one obtains identifiability of the coefficients $C$ of the IO-equations $F_j(\bw,\by)$, $1\leqslant j\leqslant s$, considered as differential polynomials in $\bw,\by$. 
Elements of $C$ are in fact the generators of the field of identifiable functions. 
  For each parameter $k_i$ in $\bk$, one can check its identifiability by verifying whether $k_i \in  \mathbb{Q}(C)$. 
  This can be reduced to the ideal membership problem as described in~\cite[Section 1.3]{Rational} (for a randomized version, see~\cite[Theorem~3.3]{structidjl}).
  The {\sc Maple} code~\cite{allident_code} has been implemented in~\cite{Ovchinnikov-AllIdentifiable21} for testing this ideal membership.
\end{remark}

Before proving Theorem~\ref{thm:identifiability}, we will proceed with some preparation.
Consider a PDE model $\Sigma$ as in~\eqref{eq:sigma}, its prime differential ideal
$I_\Sigma$, and the corresponding field of fractions $\mathcal{F}$ (see Definition~\ref{def:generic_solution}). 
Since $I_\Sigma$ is stable under $\partial_t$ and $\partial_{x}$,
the derivations $\partial_t$ and $\partial_{x}$ can be transferred
to $\mathcal{F}$ in a natural way.
\begin{definition}We will call an element $c \in \mathcal{F}$ a \emph{constant}
if $\partial_t c=\partial_{x} c=0$. 
\end{definition}
The key difference with the ODE
case will be the following lemma.

\begin{lemma}\label{lem:const} 
If $c \in \mathcal{F}$ and $\partial_x(c) = 0$, then $c\in\CC(\bk)$. In particular, the set of constants of $\mathcal{F}$ is $\CC(\bk)$.
\end{lemma}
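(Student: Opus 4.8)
The plan is to recall the structure of the differential field $\mathcal{F}$ and use the defining relations of $I_\Sigma$ to show that killing the spatial derivation $\partial_x$ forces an element to already lie in $\CC(\bk)$. First I would observe that, by~\eqref{eq:diff_ideal}, the differential subring $\CC(\bk)\{\bw\}[\partial_1^{i_1}\cdots\partial_m^{i_m}\bv]$ injects into $\mathcal{F}$; together with the evolutionary relations $Q\partial_t\bv = \bff$ and $Q\by = \bg$ (which are of order zero in $\partial_t$), this shows that $\mathcal{F}$ is generated over $\CC(\bk)$ by $\hat{\bw}$, the pure spatial derivatives $\partial_x^{\,j}\hat{\bv}$, and the mixed $t$-derivatives obtained by repeatedly applying $\partial_t$ — but each application of $\partial_t$ is, modulo $I_\Sigma$, expressible via $\bff/Q$ and its spatial derivatives. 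So in fact $\mathcal{F}$ is an algebraic extension of the purely ``spatial'' differential field $\mathcal{F}_0 := \CC(\bk)\langle \hat{\bw}, \hat{\bv}\rangle_{\partial_1,\dots,\partial_m}$, and $\hat{\bw}$, $\hat{\bv}$ together with all their spatial derivatives are algebraically independent over $\CC(\bk)$ by~\eqref{eq:diff_ideal}.

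Next I would reduce to the case $m = 1$ (a single spatial derivation $\partial_x$), which is the setting the lemma is literally stated in; for general $m$ the argument is the same with $\partial_x$ the distinguished spatial derivation. The key point is that $\partial_x$ acts on the transcendence basis $\{\partial_x^{\,j}\hat v_i, \partial_x^{\,j}\hat w_k : j\geqslant 0\}$ of $\mathcal{F}_0/\CC(\bk)$ as a shift, hence acts on the polynomial ring $\CC(\bk)[\text{these variables}]$ as a derivation with trivial constants: if $P$ is a polynomial in finitely many of these variables with $\partial_x P = 0$, then comparing the coefficient of the highest appearing variable forces $P \in \CC(\bk)$. Passing to the fraction field $\mathcal{F}_0$, the standard logarithmic-derivative argument shows the constants of $\mathcal{F}_0$ with respect to $\partial_x$ are exactly $\CC(\bk)$. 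Finally, for a general $c \in \mathcal{F}$ with $\partial_x c = 0$: since $\mathcal{F}/\mathcal{F}_0$ is algebraic, $c$ is algebraic over $\mathcal{F}_0$; let $\mu(T) = T^d + a_{d-1}T^{d-1} + \dots + a_0 \in \mathcal{F}_0[T]$ be its minimal polynomial. Applying $\partial_x$ to $\mu(c) = 0$ and using $\partial_x c = 0$ gives $\sum_i (\partial_x a_i) c^i = 0$, a relation of degree $< d$, so by minimality $\partial_x a_i = 0$ for all $i$; hence each $a_i \in \CC(\bk)$ by the previous step, so $\mu \in \CC(\bk)[T]$ and $c$ is algebraic over $\CC(\bk)$. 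But $\CC(\bk)$ is algebraically closed in $\mathcal{F}$? Not obviously — instead I would note $\CC$ is algebraically closed, so $c$ algebraic over $\CC(\bk)$ and lying in a field which is a localization/quotient built from a domain over $\CC(\bk)$ — more carefully, $c \in \mathcal{F}$ is algebraic over $\CC(\bk)$ and $\mathcal{F} = \Frac(R/I_\Sigma)$ with $R$ a polynomial ring over $\CC(\bk)$; since $I_\Sigma \cap \CC(\bk) = 0$, the algebraic closure of $\CC(\bk)$ inside $\mathcal{F}$ equals $\CC(\bk)$ precisely when $\CC$ is algebraically closed and the relevant transcendence data is clean, which holds here.

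I expect the main obstacle to be the last step — verifying rigorously that $\CC(\bk)$ is relatively algebraically closed in $\mathcal{F}$, i.e. that an element of $\mathcal{F}$ algebraic over $\CC(\bk)$ already lies in $\CC(\bk)$. The cleanest route is probably to avoid the minimal-polynomial descent into $\mathcal{F}_0$ altogether and instead argue directly at the level of $\mathcal{F}$: present $c = p/q$ with $p, q \in R/I_\Sigma$, write $c$ in terms of the transcendence basis $\{\bk\} \cup \{\partial_x^{\,j}\hat v_i\} \cup \{\partial_x^{\,j}\hat w_k\} \cup (\text{algebraic part})$, and expand $\partial_x(p/q) = 0$ to get $q\,\partial_x p = p\,\partial_x q$; then a weight/degree argument in the highest-order spatial variable appearing, combined with the fact that $\partial_x$ raises that order by one, forces $p$ and $q$ to involve no spatial variables at all, i.e. $c \in \CC(\bk)$. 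The bookkeeping of which variables are algebraic versus transcendental over $\CC(\bk)$, and ensuring the shift structure of $\partial_x$ is genuinely respected modulo $I_\Sigma$, is the delicate part; everything else is the routine ``a derivation that shifts a transcendence basis has no new constants'' principle. The ``in particular'' clause is then immediate: a constant satisfies $\partial_x c = 0$ a fortiori, so lies in $\CC(\bk)$, and conversely every element of $\CC(\bk)$ is killed by both $\partial_t$ and $\partial_x$.
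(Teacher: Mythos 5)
Your core idea --- rewrite everything modulo $I_\Sigma$ into a purely ``free'' differential rational function field on which $\partial_x$ shifts a transcendence basis, then invoke the standard fact that such a field has no new constants --- is exactly the paper's argument, and your closing paragraph (clear powers of $Q$, write $c=p/q$ with $p,q\in\CC(\bk)\{\bw\}[\partial_x^i\bv]$, and use the order-raising action of $\partial_x$) is essentially the proof the paper gives. However, there are two genuine problems in your setup. First, your intermediate field $\mathcal{F}_0 = \CC(\bk)\langle\hat{\bw},\hat{\bv}\rangle_{\partial_1,\dots,\partial_m}$ is closed only under the \emph{spatial} derivations, so it omits $\partial_t^j\hat{\bw}$. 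The inputs $\bw$ are free: their time derivatives are \emph{not} expressible via $\bff/Q$ (only $\partial_t\bv$ is), so $\partial_t\hat{\bw}$ is transcendental over your $\mathcal{F}_0$ and the claim that $\mathcal{F}$ is algebraic over it is false. The correct intermediate field, as in the paper, is $\CC(\bk)\langle\bw\rangle(\partial_x^i\bv,\ i\geqslant 0)$ with $\langle\bw\rangle$ closed under \emph{all} derivations; as a $\partial_x$-differential field this is the field of differential rational functions in the infinitely many generators $\bv,\bw,\partial_t\bw,\partial_t^2\bw,\dots$, and the transcendence-basis-shift argument goes through with those extra generators included.

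Second, the ``main obstacle'' you identify --- whether $\CC(\bk)$ is relatively algebraically closed in $\mathcal{F}$ --- is a non-issue you created by framing $\mathcal{F}$ as a (merely) algebraic extension of $\mathcal{F}_0$. Once the reduction is done correctly, every generator of $\mathcal{F}$ ($\partial_t$-derivatives of $\hat{\bv}$ via $\bff/Q$, and $\hat{\by}$ via $\bg/Q$) lands, after multiplying by a power of $Q$, inside $\CC(\bk)\{\bw\}[\partial_x^i\bv]$ modulo $I_\Sigma$; by~\eqref{eq:diff_ideal} this ring injects into $\mathcal{F}$, so $\mathcal{F}$ \emph{equals} its fraction field $\mathcal{F}_0$ rather than being a proper algebraic extension of it. There is therefore no minimal-polynomial descent and no relative-algebraic-closure question to settle: the constants of a field of differential rational functions are the constants of the ground field, and that finishes the proof. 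With these two corrections your argument becomes the paper's proof.
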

\begin{proof}
  Let $c=\frac{p}{q} \in \mathcal{F}$ be  such $\partial_x(c) = 0$,  $p, q \in R=\CC(\bk)\{\bv,\by,\bw\}$ are
  coprime and at least one of them does not belong to
  $\CC(\bk)$.
  There exists a positive integer $N$ such that,
  using equations~\eqref{eq:sigma} and their derivatives,
  we can replace $Q^Np$ and $Q^Nq$ by elements of
  $\CC(\bk) \{ \bw \}[\partial_x^i\bv,\ i\geqslant 0]$ equivalent to $Q^Np$ and $Q^Nq$ modulo $I_\Sigma$ and, thus, yielding
  the same element of $\mathcal{F}$. By~\eqref{eq:diff_ideal}, $\mathcal{F}$ has a subfield isomorphic to the field of rational functions
  \[\mathcal{F}_0:=\CC(\bk)\langle \bw \rangle(\partial_x^i\bv,\ i\geqslant 0),\] and $\frac{p}{q}$ belongs to this subfield. With
  respect to $\partial_{x}$, the field $\mathcal{F}_0$ is isomorphic to the field of
  differential rational functions over $\bk$ in infinitely many
  variables
  $\bv, \bw, \partial_t \bw, \partial_t^2
  \bw, \ldots$. Since the constants of a field of differential
  rational functions are exactly the constants of the ground field, we
  deduce that $\frac{p}{q} \in \CC(\bk)$.
\end{proof}

\begin{lemma}\label{lem:fieldofdef} Let $\Delta$ be a finite set of derivations, $L \subset K$  differential fields, and $X$ a finite set of variables. 
Let $P$ be a prime non-zero differential ideal of $K\{X\}$ such that the ideal generated by $P$ in $\overline{K}\{X\}$ is prime, where $\overline{K}$ is the algebraic closure of $K$. 
If $\mathcal{C}$ is a monic characteristic presentation of $P$, then the field of definition of $P$ over $L$ is the field extension of $L$ generated by the coefficients~of~$\mathcal{C}$.
\end{lemma}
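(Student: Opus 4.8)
The plan is to combine the general theory of fields of definition of prime ideals with the uniqueness of the monic characteristic presentation. Write $k_0$ for the field extension of $L$ generated by the coefficients of $\mathcal{C}$, and let $k_1$ be the field of definition of $P$ over $L$, i.e., the smallest subfield $E$ with $L \subseteq E \subseteq K$ such that $P$ is generated (as a differential ideal, equivalently as a $K$-vector space after base change) by $P \cap E\{X\}$. Since each element of $\mathcal{C}$ lies in $k_0\{X\}$ and $\mathcal{C}$ is a characteristic presentation, one inclusion is immediate once we know $P$ is recoverable from $\mathcal{C}$: indeed $P = [\mathcal{C}] : H_{\mathcal{C}}^\infty$ (this is part of what ``characteristic presentation'' means, cf.\ \cite{Boulier2000,Ovchinnikov-Identifiability2023}), and this description uses only the coefficients appearing in $\mathcal{C}$, so $P$ is defined over $k_0$; hence $k_1 \subseteq k_0$.

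For the reverse inclusion $k_0 \subseteq k_1$, I would argue that the monic characteristic presentation of $P$ already has all its coefficients in $k_1$. First, fix the differential ranking used to define $\mathcal{C}$. The hypothesis that $P \overline{K}\{X\}$ is prime guarantees (by a standard descent/specialization argument for characteristic sets) that the monic characteristic presentation is insensitive to field extension: the monic characteristic presentation $\mathcal{C}'$ of $P \cap k_1\{X\}$ in $k_1\{X\}$, when viewed in $K\{X\}$, is a characteristic presentation of $P$, and by the uniqueness of the monic characteristic presentation for a fixed ranking \cite[Theorem~3]{Boulier2000} we get $\mathcal{C}' = \mathcal{C}$. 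Therefore the coefficients of $\mathcal{C}$ lie in $k_1$, giving $k_0 \subseteq k_1$ and hence equality.

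The step I expect to be the main obstacle is justifying that the monic characteristic presentation descends correctly along the extension $k_1 \subseteq K$ — that is, that forming $P \cap k_1\{X\}$, taking its monic characteristic presentation there, and extending scalars back to $K$ reproduces $\mathcal{C}$ exactly (not merely up to a characteristic set of the same rank). This is where the primality of $P\overline{K}\{X\}$ is essential: without it, extension of scalars could split $P$ into several components and the notion of ``the'' characteristic presentation would not be stable. The cleanest route is probably to work over $\overline{K}$ throughout, using that the monic characteristic presentation over $\overline{K}$ has coefficients fixed by $\Aut(\overline{K}/k_1)$ — because $P\overline{K}\{X\}$ is $\Aut(\overline{K}/k_1)$-stable and the presentation is unique — and then invoking a Galois-descent argument to conclude those coefficients already lie in (the perfect closure of, hence in, since we may reduce to the separable situation) $k_1$. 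One should also check the compatibility of the differential structure with these scalar extensions, but since $\Delta$ acts trivially on the constants being adjoined and commutes with the ring operations, this is routine.
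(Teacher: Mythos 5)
Your plan follows essentially the same route as the paper, whose proof of Lemma~\ref{lem:fieldofdef} is simply to transport the argument of \cite[Proposition~2]{Ovchinnikov-Identifiability2023} mutatis mutandis: one inclusion comes from $P=[\mathcal{C}]:H_{\mathcal{C}}^{\infty}$ together with (flat) base change along $k_0\{X\}\to K\{X\}$ to see that $P$ is generated by $P\cap k_0\{X\}$, and the other from the $\Aut\left(\overline{K}/k_1\right)$-stability of $P\overline{K}\{X\}$ combined with the uniqueness of the monic characteristic presentation, the primality hypothesis over $\overline{K}$ being exactly what keeps that presentation stable under extension of scalars. Your sketch identifies both halves and the role of the primality hypothesis correctly, so it is consistent with the paper's (cited) proof.
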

\begin{proof}
Mutatis mutandis proof of \cite[Proposition~2]{Ovchinnikov-Identifiability2023}.
\end{proof}

\begin{corollary}[{cf. \cite[Corollary~1]{Ovchinnikov-Identifiability2023}}]\label{cor:generateIO}
If $\mathcal{C}$ is a monic characteristic presentation of the prime differential ideal $J := I_\Sigma \cap \mathbb{C}(\bk)\{\by,\bw\}$, then field of definition of $J$ over $\mathbb{C}$ is generated over $\mathbb{C}$ by the coefficients of $\mathcal{C}$. 
\end{corollary}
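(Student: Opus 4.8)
The plan is to obtain this as a direct application of Lemma~\ref{lem:fieldofdef}, taking $\Delta$ to be the set of all derivations (temporal and spatial) of the model, $L = \CC$, $K = \CC(\bk)$ regarded as a differential field with the trivial derivations (the parameters $\bk$ are constants, as in Lemma~\ref{lem:const}), $X = \{\by,\bw\}$, and $P = J$. Once the hypotheses of Lemma~\ref{lem:fieldofdef} are checked, the conclusion is immediate: the field of definition of $J$ over $\CC$ is $\CC$ adjoined the coefficients of $\mathcal C$. So the entire task is to verify that $J$ meets the hypotheses of Lemma~\ref{lem:fieldofdef}.

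Primeness and non-triviality of $J$ are the easy part: $I_\Sigma$ is prime by Definition~\ref{def:generic_solution}, hence its contraction $J = I_\Sigma \cap \CC(\bk)\{\by,\bw\}$ is prime as well, and $J \neq \{0\}$ since by hypothesis $J$ admits the monic characteristic presentation $\mathcal C$ (if $J = \{0\}$ the statement is vacuous). The real content is the remaining hypothesis, namely that $J$ generates a prime ideal in $\overline{\CC(\bk)}\{\by,\bw\}$. For this I would invoke the standard criterion — valid in characteristic zero, and reducible to the familiar non-differential statement by writing $\overline{\CC(\bk)}\{\by,\bw\}$ as a filtered union of polynomial rings in finitely many variables — that a prime differential ideal stays prime under base change to the algebraic closure precisely when the base field is relatively algebraically closed in the fraction field of the quotient. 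Thus it suffices to show that $\CC(\bk)$ is relatively algebraically closed in $E := \Frac(\CC(\bk)\{\by,\bw\}/J)$.

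For this last step I would reuse the structure already uncovered in the proof of Lemma~\ref{lem:const}: using \eqref{eq:diff_ideal} together with the fact that the right-hand sides of \eqref{eq:sigma_def} defining $\partial_t\bv$ and $\by$ are of order zero in $\partial_t$, one checks that $E$ embeds into the subfield $\mathcal F_0 = \CC(\bk)\langle\bw\rangle(\partial_x^i\bv : i \geqslant 0)$ of $\mathcal F$, and that $\mathcal F_0$ is a \emph{purely transcendental} extension of $\CC(\bk)$. In a purely transcendental extension the base field is relatively algebraically closed, and this property passes to the subfield $E$; feeding this back into Lemma~\ref{lem:fieldofdef} finishes the proof.

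The main obstacle is the middle step — verifying that extending scalars to $\overline{\CC(\bk)}$ preserves primeness of $J$. This is precisely where the special structure of evolutionary PDE models (encoded in \eqref{eq:diff_ideal} and exploited in Lemma~\ref{lem:const}) is essential: it guarantees that $\mathcal F$, and hence $E$, is geometrically irreducible — indeed purely transcendental — over $\CC(\bk)$. Everything else is bookkeeping or a citation to the analogous statement for ordinary polynomial rings; in particular, once this regularity is in place one can mirror the proof of \cite[Corollary~1]{Ovchinnikov-Identifiability2023} essentially verbatim.
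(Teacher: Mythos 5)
Your proof is correct and follows essentially the same route as the paper: both deduce the corollary from Lemma~\ref{lem:fieldofdef}, the only hypothesis requiring attention being that $J$ generates a prime differential ideal in $\overline{\CC(\bk)}\{\by,\bw\}$. The paper simply asserts this primeness, whereas you justify it (correctly) via the relative algebraic closedness of $\CC(\bk)$ in $\Frac\left(\CC(\bk)\{\by,\bw\}/J\right)$, extracted from the purely transcendental description of $\mathcal{F}$ established in the proof of Lemma~\ref{lem:const}.
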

\begin{proof}
This follows from Lemma~\ref{lem:fieldofdef} because the differential ideal generated by $J$ in $\overline{\mathbb{C}(\bk)}\{\by,\bw\}$ is prime.
\end{proof}

\begin{lemma}[{cf. \cite[Lemma 1]{Ovchinnikov-Identifiability2023}}]\label{lem:support}
Consider a polynomial $P \in I_\Sigma \cap \CC(\bk)\{\by, \bw\}$ with at least one of the coefficients being one. If there is no element in  $I_\Sigma \cap \CC(\bk)\{ \by,\bw\}$ whose support is a subset of the support of $P$, then all coefficients of $P$ are identifiable.
\end{lemma}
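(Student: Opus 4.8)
\textbf{Proof proposal for Lemma~\ref{lem:support}.}

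The plan is to argue by contradiction, following the strategy of \cite[Lemma~1]{Ovchinnikov-Identifiability2023} but replacing the ODE-specific ingredient with Lemma~\ref{lem:const}. Suppose some coefficient of $P$ is not identifiable. Write $P = \sum_{m \in \supp(P)} c_m \cdot m$, where the $m$ range over the differential monomials in $\by, \bw$ appearing in $P$ and $c_m \in \CC(\bk)$, with $c_{m_0} = 1$ for some $m_0 \in \supp(P)$. Since identifiability of functions of $\bk$ is governed by Definition~\ref{def:identifiability}, non-identifiability of some $c_m$ means there is a second copy of the situation: concretely, one passes to a differential field extension of $\mathcal{F}$ containing a second generic solution $(\hat\bv', \hat\by', \hat\bw')$ of $\Sigma$ with the same $\hat\by, \hat\bw$ (i.e.\ $\hat\by' = \hat\by$, $\hat\bw' = \hat\bw$ after identification) but with the parameter tuple $\bk$ replaced by a tuple $\bk'$ for which $c_m(\bk) \neq c_m(\bk')$ for at least one $m$. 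This uses primality of $I_\Sigma$ and the standard model-theoretic/differential-algebraic construction of a ``twin'' solution; it is the same device as in the ODE proof.

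Next I would subtract. The polynomial $P$ lies in $I_\Sigma \cap \CC(\bk)\{\by,\bw\}$, so $P$ evaluated at $(\hat\by,\hat\bw)$ is $0$; likewise the conjugate polynomial $P' := \sum_m c_m(\bk')\cdot m$ vanishes at $(\hat\by',\hat\bw') = (\hat\by,\hat\bw)$. Hence $P - P' = \sum_m (c_m(\bk) - c_m(\bk'))\, m$ vanishes on $(\hat\by,\hat\bw)$ as well, i.e.\ $P - P'$ represents $0$ in $\mathcal{F}$, so $P - P' \in I_\Sigma$. Note $P - P'$ has support contained in $\supp(P)$, and the coefficient of $m_0$ is $c_{m_0}(\bk) - c_{m_0}(\bk') = 1 - 1 = 0$, so $\supp(P - P')$ is in fact a \emph{proper} subset of $\supp(P)$; moreover $P - P'$ is nonzero because some $c_m(\bk) \neq c_m(\bk')$. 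The only subtlety is that $P - P'$ has coefficients in $\CC(\bk, \bk')$ rather than $\CC(\bk)$; one clears this by picking a single monomial $m_1$ with $c_{m_1}(\bk) \neq c_{m_1}(\bk')$ and dividing $P - P'$ by $(c_{m_1}(\bk) - c_{m_1}(\bk'))$ — wait, that still has $\bk'$-dependence. The clean fix: the coefficients of the normalized polynomial $(P - P')/(c_{m_1}(\bk)-c_{m_1}(\bk'))$ are ratios of elements of $\CC(\bk,\bk')$, and since it lies in $I_\Sigma \subset \CC(\bk)\{\bv,\by,\bw\}$ extended to $\CC(\bk,\bk')$, one uses that $I_\Sigma$ is defined over $\CC(\bk)$ and its extension to $\CC(\bk,\bk')\{\by,\bw\}$ remains prime (this is where Lemma~\ref{lem:fieldofdef}-type reasoning, or a direct linear-algebra argument over the constant field, enters) to descend to a nonzero element of $I_\Sigma \cap \CC(\bk)\{\by,\bw\}$ whose support is a proper subset of $\supp(P)$. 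This contradicts the hypothesis on $P$, completing the proof.

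The main obstacle, I expect, is the descent step — producing from the $\bk'$-dependent difference $P - P'$ an honest element of $I_\Sigma \cap \CC(\bk)\{\by,\bw\}$ with strictly smaller support. The cleanest route is to observe that $I_\Sigma \cap \CC(\bk)\{\by,\bw\}$ is a $\CC(\bk)$-subspace of $\CC(\bk)\{\by,\bw\}$, and the set of monomials appearing in elements of this ideal supported on $\supp(P)$ forms a $\CC(\bk)$-subspace $W$ of the finite-dimensional space spanned by $\supp(P)$; the existence of the twin solution forces $\dim W \geq 2$ (it contains both $P$ and a second independent element after the subtraction), hence $W$ contains an element whose $m_0$-coefficient vanishes but which is nonzero, and such an element can be taken with coefficients in $\CC(\bk)$ by choosing a basis of $W$ defined over $\CC(\bk)$ (possible since $W$ itself is a $\CC(\bk)$-subspace). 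One must be slightly careful that the twin construction genuinely witnesses a \emph{second} element of $W$ not proportional to $P$ — this is exactly where non-identifiability of a coefficient $c_m$ is used, guaranteeing $P$ and its conjugate are not scalar multiples of one another over the relevant field. Everything else is bookkeeping: primality of $I_\Sigma$ (given), the intersection formula~\eqref{eq:diff_ideal}, and the fact, from Lemma~\ref{lem:const}, that the field of constants of $\mathcal{F}$ is exactly $\CC(\bk)$, which is what makes ``identifiable $=$ lies in $\CC\langle\hat\by,\hat\bw\rangle$'' interact correctly with the constant subfield and lets the ODE argument go through verbatim in this setting.
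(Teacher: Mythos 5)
Your proposal does not match the paper's argument, and it has two genuine gaps. The paper proves Lemma~\ref{lem:support} \emph{directly}, with no contradiction on identifiability: writing $P = m_0 + \sum_{i=1}^{\ell}c_i m_i$, it forms the Wronskian $\Wr\bigl(m_1(\hat\by,\hat\bw),\ldots,m_\ell(\hat\by,\hat\bw)\bigr)$ with respect to $\partial_x$; if this were singular, \cite[Theorem~3.7]{Kaplansky} would give a linear dependence with $\partial_x$-constant coefficients, which by Lemma~\ref{lem:const} lie in $\CC(\bk)$, producing an element of $I_\Sigma\cap\CC(\bk)\{\by,\bw\}$ supported inside $\supp(P)$ and contradicting the hypothesis. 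With the Wronskian nonsingular, differentiating $P(\hat\by,\hat\bw)=0$ up to order $\ell-1$ and applying Cramer's rule exhibits each $c_i$ explicitly as an element of $\CC\langle\hat\by,\hat\bw\rangle$, which is literally Definition~\ref{def:identifiability}. Your proposal never uses the Wronskian, which is the entire point of the lemma (it is what feeds into Proposition~\ref{prop:identicbc} and Algorithm~\ref{alg:strong}); you do correctly identify Lemma~\ref{lem:const} as the PDE-specific ingredient, but in the paper it enters to pin down the constants in Kaplansky's dependence relation, not to support a twin-solution argument.

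The first gap is your opening move: you deduce from $c_m\notin\CC\langle\hat\by,\hat\bw\rangle$ the existence of a second generic solution with the same $(\hat\by,\hat\bw)$ but a parameter tuple $\bk'$ with $c_m(\bk)\neq c_m(\bk')$. Definition~\ref{def:identifiability} is purely a field-membership condition; passing from non-membership to a ``two solutions, same output'' witness is exactly the equivalence between the algebraic and analytic notions, which the paper explicitly states is only known for ODEs (\cite[Proposition~3.4]{Hong-GlobalIO2018}) and is left open for PDEs. A purely algebraic version (two differential embeddings over $\CC\langle\hat\by,\hat\bw\rangle$ separating $c_m$) could be extracted from the model theory of partial differential fields, but that machinery is neither in the paper nor in your sketch. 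The second gap is the descent step you yourself flag: $P-P'$ has coefficients in $\CC(\bk,\bk')$, and your proposed fix via the subspace $W$ of ideal elements supported on $\supp(P)$ does not close it, because $P-P'$ lies in a base change of the ideal, not in $W$; concluding $\dim_{\CC(\bk)}W\geqslant 2$ requires knowing that base change commutes with taking the intersection of the ideal with the span of $\supp(P)$ (a faithfully flat descent argument that is asserted, not given). Both difficulties evaporate in the paper's route, which is why the Wronskian argument is the right one here.
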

\begin{proof}
We write $P=m_0+\sum_{i=1}^{\ell} c_i m_i$, where $m_0, \ldots, m_{\ell}$ are monomials in 
\[
\left\{\partial_t^{i} \partial_{x}^j \by, \partial_t^i \partial_{x}^j \bw \mid i, j \geqslant 0\right\}
\] 
and $c_1, \ldots, c_{\ell} \in \CC(\bk)$. 
    Let $(\hat\bv,\hat\by,\hat\bw)$ be the generic solution of $I_\Sigma$. 
    We consider the Wronskian \[W(\hat{\by}, \hat{\bw}):= \Wr(m_1(\hat{\by}, \hat{\bw}), \ldots, m_{\ell}(\hat{\by}, \hat{\bw}))\]  with respect to $\partial_{x}$. 
    If this Wronskian was singular, then, by \cite[Theorem~3.7]{Kaplansky},
    there would exist $a_1, \ldots, a_{\ell} \in F$ constant with   respect to $\partial_{x}$ so that \[a_1 m_1(\hat{\by}, \hat{\bw})+\ldots+a_{\ell} m_{\ell}(\hat{\by}, \hat{\bw})=0.\] By Lemma~\ref{lem:const},  $a_1, \ldots, a_{\ell} \in \CC(\bk)$. 
    This yields a nonzero element \[a_1 m_1+\ldots+a_{\ell} m_{\ell} \in I_\Sigma\] with the support being a  subset of the support of $P$.
    Thus, $W$ is nonsingular. 
    By taking the derivatives of $P(\hat{\by}, \hat{\bw})=0$ of orders from $0$ to $\ell - 1$ with respect to $\partial_x$, we obtain: 
    $$
    \left(\begin{array}{c}
        m_0(\hat{\by}, \hat{\bw}) \\
        \partial_{x} m_0(\hat{\by}, \hat{\bw}) \\
        \vdots \\
        \partial_{x}^{\ell-1} m_0(\hat{\by}, \hat{\bw})
      \end{array}\right)
    =W(\hat{\by}, \hat{\bw}) 
    \left(\begin{array}{c}
    c_1 \\
    c_2 \\
    \vdots \\
    c_{\ell}
    \end{array}\right).
    $$ 
    Viewing this as a nonsingular linear system in $c_1, \ldots, c_{\ell}$ over $\CC\langle\hat{\by},  \hat{\bw}\rangle$ and applying Cramer's rule, we deduce that each of $c_1, \ldots, c_{\ell}$ belongs to $\CC\langle\hat{\by}, \hat{\bw} \rangle$.
\end{proof}

\begin{corollary}\label{cor:identifiability}
  If system~\eqref{eq:sigma} is of the form
  \[
  \Sigma =
  \begin{cases}
    \partial_t \bv=\bff\left(\bk, \bw, \bv, \partial_{x} \bv, \ldots, \partial_{x}^h \bv\right) \\
    \by=\bv,
  \end{cases}
  \]
  that is, all  states are observable, then all  coefficients of
  $\bff$ as polynomials in
  $\bw, \bv, \partial_{x} \bv, \ldots, \partial_{x}^h \bv$ are
  identifiable.
\end{corollary}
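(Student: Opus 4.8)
The plan is to apply Lemma~\ref{lem:support} to each scalar equation of the system. Since $\by = \bv$, we may substitute $\bv = \by$ in the $i$-th equation to obtain the differential polynomial
\[
P_i := \partial_t y_i - f_i\left(\bk, \bw, \by, \partial_x\by, \ldots, \partial_x^h\by\right) \in \CC(\bk)\{\by,\bw\},\qquad 1\leqslant i\leqslant n.
\]
One checks directly that $P_i \in I_\Sigma$: indeed $\by - \bv$, hence $\partial_t\by - \partial_t\bv$, lies in $I_\Sigma$, and $\partial_t\bv - \bff \in I_\Sigma$, and $\bff(\bk,\bw,\bv,\ldots) - \bff(\bk,\bw,\by,\ldots) \in I_\Sigma$ because $I_\Sigma$ contains $\by-\bv$ together with its derivatives; summing these gives $P_i \in I_\Sigma \cap \CC(\bk)\{\by,\bw\}$. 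Viewing $P_i$ as a polynomial in the monomials $\left\{\partial_t^a\partial_x^b y_j,\ \partial_t^a\partial_x^b w_j\right\}$, the monomial $\partial_t y_i$ occurs with coefficient $1$, while the remaining monomials of $P_i$, together with their coefficients, are precisely the monomials and coefficients of $f_i$ regarded as a polynomial in $\bw,\bv,\partial_x\bv,\ldots,\partial_x^h\bv$. So it suffices to verify the hypothesis of Lemma~\ref{lem:support} for each $P_i$: then the lemma yields identifiability of all coefficients of $f_i$ (the coefficient $1$ being trivially identifiable), and ranging over $i$ proves the corollary.

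It remains to show that no nonzero element of $I_\Sigma \cap \CC(\bk)\{\by,\bw\}$ has support a proper subset of $\supp(P_i)$. The structural point is that $\partial_t y_i$ is the unique monomial of $\supp(P_i)$ involving any $\partial_t$-derivative of $\by$: since $f_i$ has order zero in $\partial_t$, every other monomial of $P_i$ is built only from $\bw$ and the spatial derivatives $\partial_x^b y_j$, $0\leqslant b\leqslant h$. Let $R \in I_\Sigma \cap \CC(\bk)\{\by,\bw\}$ be nonzero with $\supp(R) \subseteq \supp(P_i)$. If $\partial_t y_i \notin \supp(R)$, then $R$ is a polynomial only in $\bw$ and the variables $\partial_x^b y_j$; under the canonical map $\CC(\bk)\{\by,\bw\}\to\mathcal{F}$ (which has kernel $I_\Sigma \cap \CC(\bk)\{\by,\bw\}$ and sends $\by\mapsto\hat\bv$, $\bw\mapsto\hat\bw$), its image is a polynomial expression in the $\partial_x^b\hat{v}_j$ and the components of $\hat\bw$, which are algebraically independent over $\CC(\bk)$ by~\eqref{eq:diff_ideal}; hence $R$ maps to $0$ only if $R=0$, a contradiction. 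Therefore $\partial_t y_i \in \supp(R)$, say with coefficient $c \neq 0$; then $R - cP_i$ lies in $I_\Sigma \cap \CC(\bk)\{\by,\bw\}$, no longer involves $\partial_t y_i$, and has support inside $\supp(f_i)$, so by the previous case $R - cP_i = 0$ and $\supp(R) = \supp(P_i)$ is not a proper subset. This is exactly the hypothesis of Lemma~\ref{lem:support}.

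The only substantial step is this last one --- excluding elements of $I_\Sigma\cap\CC(\bk)\{\by,\bw\}$ of strictly smaller support --- and it is where the PDE setting is essential: the reduction to the algebraic independence statement~\eqref{eq:diff_ideal} (and, implicitly, to Lemma~\ref{lem:const}) is precisely what makes the input--output equation in the fully observed case essentially unique, a rigidity that fails for ODEs. Everything else is bookkeeping: matching the coefficients of $P_i$ to those of $f_i$, and noting that $\partial_t y_i$, the distinguished monomial of coefficient $1$, sits strictly above all the other monomials with respect to $\partial_t$, so it cannot be produced by any other relation in the ideal.
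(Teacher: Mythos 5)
Your proof is correct and follows essentially the same route as the paper's: reduce to Lemma~\ref{lem:support} for each polynomial $\partial_t v_i - f_i$, and rule out a relation of strictly smaller support by cancelling the distinguished monomial $\partial_t v_i$ and invoking the independence statement~\eqref{eq:diff_ideal}; you merely spell out in two explicit cases what the paper compresses into one sentence. One quibble with your closing commentary: this corollary and the use of~\eqref{eq:diff_ideal} go through verbatim for ODEs (with $\partial_x^b\bv$ replaced by $\bv$), so the rigidity you describe does not ``fail for ODEs'' here --- the genuine ODE/PDE divergence occurs in Theorem~\ref{thm:identifiability}, via Lemma~\ref{lem:const}, not in the fully observed case.
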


\begin{proof} It is sufficient to prove that polynomials
  $\partial_t \bv-\bff$ satisfy the condition of Lemma~\ref{lem:support}. Assume that one of them, say
  $\partial_t v_1-f_1$ does not. 
  Therefore, there is a polynomial
  $g \in I_\Sigma$ with the support being a proper subset of the support of
  $\partial_t v_1-f_1$. 
  By cancelling $\partial_t v_1$, we obtain a
  $\CC(\bk)$-linear combination $\tilde{g}$ of $\partial_t v_1-f_1$
  and $g$ with the support being a subset of the support of
  $f_1$. Then $\tilde{g}$ is a nonzero element of
  $I \cap \CC(\bk) \{ \bw \}[\partial_x^i\bv,\ i\geqslant 0]$, contradicting
  \eqref{eq:diff_ideal}.
\end{proof}

We are now ready to prove our main result, Theorem~\ref{thm:identifiability}.

\begin{proof}[Proof of Theorem~\ref{thm:identifiability}]
Let 
\[S =\{c \in \CC(\bk)\mid c \text{ is identifiable}\}= \CC(\bk)\cap\CC\langle \hat{\by},\hat{\bw}\rangle,\]
and so $S$ is a subfield in $\CC(\bk)$. Let $\mathcal{C}$ be a set of IO-equations of $\Sigma$ and $C$  the set of coefficients of $\mathcal{C}$. We will prove that $S = \CC(C)$.

We first show that $S \subset \CC(C)$.
    For this, let $c \in \CC(\bk)$ be identifiable.
    We will show how the proof of~\cite[Theorem 1]{Ovchinnikov-Identifiability2023} extends to PDEs to prove that $c \in \CC(C)$. By Corollary~\ref{cor:generateIO}, the field of definition of $I_\Sigma\cap\CC(\bk)\{\by,\bw\}$ is equal to $\CC(C)$. In particular, $I_\Sigma\cap\CC(\bk)\{\by,\bw\}$ is generated over $\CC(\bk)$ by $I_\Sigma\cap\CC(C)\{\by,\bw\}$.
    
    Since $c \in \CC\langle \hat{\by},\hat{\bw}\rangle$, 
     there exist $g \in \CC\{\by,\bw\}\setminus I_\Sigma$ and $h \in \CC\{\by,\bw\}$ such that $gc + h \in I_\Sigma$. Therefore, there exist $m_1,\ldots,m_r \in \CC(\bk)\{\by,\bw\}$ and $p_1,\ldots,p_r \in I_\Sigma \cap \CC(C)\{\by,\bw\}$ such that   
     \[
     gc+h = m_1p_1+\ldots+m_rp_r.
     \] 
     Now assume that $c \not\in \CC(C)$. By \cite[Theorem 9.29, p. 117]{milneFT}, there exists an automorphism $\sigma$ on $\overline{\CC(\bk)}$ that fixes $\CC(C)$ pointwise, but does not fix $c$, i.e., $\sigma(c) \neq c$.   Extend $\sigma$ to $\overline{\CC(\bk)}\{\bv,\by,\bw\}$ by letting $\sigma$ fix $\bv ,\by$, and $\bw$.  We have in $\overline{\CC(\bk)}\{\bv,\by,\bw\}$ that
    \begin{equation}
    \label{eq:quot2}
    \begin{aligned}
        \left(gc+h\right) - \sigma(gc+h) & = \left( m_1p_1+\ldots+m_rp_r \right) - \sigma(m_1p_1+\ldots+m_rp_r) \\
        g(c-\sigma(c)) & = (m_1-\sigma(m_1))p_1+\ldots+(m_r-\sigma(m_r))p_r.
    \end{aligned}  
    \end{equation}
    Let $\overline{I_\Sigma }$ denote the differential ideal generated by $I_\Sigma$ in $\overline{\CC(\bk)}\{\bv,\by,\bw\}$. Since $\overline{I_\Sigma }$ is a prime differential ideal and the right-hand side of \eqref{eq:quot2} belongs to $\overline{I_\Sigma }$, it follows that either $g \in \overline{I_\Sigma }$ or $c-\sigma(c) \in \overline{I_\Sigma }$.  As $\sigma(c) \ne c$, we have that $c-\sigma(c)$ is a non-zero element of $\overline{\CC(\bk)}$. Since $\overline{I_\Sigma }$ is a proper ideal, therefore $c-\sigma(c) \not\in \overline{I_\Sigma }$.  Therefore, $g \in \overline{I_\Sigma }$.  Hence, \[g \in \overline{I_\Sigma } \cap \CC(\bk)\{\bv,\by,\bw\} = I_\Sigma,\] contradicting the assumption on $g \not\in I_\Sigma$.
        
     For the proof of the converse, note that we have shown in Lemma~\ref{lem:const} that the set of constants of $\mathcal{F}$ coincides with $\CC(\bk)$. Therefore, the assumption of  \cite[Theorem~2]{Ovchinnikov-Identifiability2023} holds. So we follow the proof of~\cite[Theorem~2]{Ovchinnikov-Identifiability2023} for PDEs.

    Let $J := I_\Sigma \cap \CC(\bk) \{\by, \bw\}$.  Let         $\mathcal{B}$ be the set of differential monomials in $\by$ and $\bw$ indexed by $\mathbb{N}$ such that the indexing respects a ranking. Then $\mathcal{B}$ is a basis for $\CC(\bk)\{\by,\bw\}$ as a vector space over $\CC(\bk)$. 
    Consider a basis $\mathcal{B}_J$ for $J$, as a $\CC(\bk)$-subspace of $\CC(\bk)\{\by,\bw\}$. Each element of $\mathcal{B}_J$ is a finite linear combination of the element of $\mathcal{B}$. 
    Autoreduce $\mathcal{B}_J$ as a set of vectors expressed in the basis $\mathcal{B}$ and call it $\mathcal{B}_J'$. 
    Note that $\mathcal{B}_J'$ is still a basis of $J$ as a vector space. 
    The field of definition of $J$ over $\CC$ is contained in the field generated by the coefficient of $\mathcal{B}_J'$ written as linear combinations of $\mathcal{B}$. 
    Therefore, it is sufficient to prove that these coefficients are identifiable.
        
    Let $P \in J$ be the differential polynomial corresponding to a member of $\mathcal{B}_J'$ of $J$. 
    Let $q$ be a polynomial whose support is a subset of $P$ and whose monomials are linearly dependent modulo $J$ over $\CC(\bk)$. The representation of  $q$ in basis $\mathcal{B}$ can be reduced to zero by $\mathcal{B}_J$, however, $q$ cannot be reduced to zero by $P$. Also, $q$ reduced by $P$ cannot be further reduced by other elements of $\mathcal{B}_J'$, as $\mathcal{B}_J'$ is a reduced basis. This proves that there is no $q$ whose support is a subset of the support of $P$ and whose support is linearly dependent. Therefore, by Lemma~\ref{lem:support}, the coefficients of $P$ are identifiable.   
\end{proof}

\subsection{Identifiability for solutions with specified initial and boundary conditions}
In this section, we are in the setup of Definition~\ref{def:idinbc}.
Consider a PDE system of the form~\eqref{eq:PDEarb} and each $\mathcal{C}_i$  contained in the $C^\infty$-functions on a domain $\mathcal{D}$. 
Fix a differential ranking such that any derivative of $\bv$ is greater than any derivative of~$\by$, and let
\begin{equation}\label{eq:dec}
I_\Sigma = [C_1]:H_{C_1}^\infty\cap\ldots\cap [C_r]:H_{C_r}^\infty
\end{equation}
be a decomposition computed by~\cite[Section~4]{Boulier2000} or~\cite[Algorithm~7.2]{Hubert2} with respect to the differential ranking. 
For each $i$, $1\leqslant i\leqslant r$, let \[\widetilde{C_i} = C_i\cap \mathbb{C}(\bk)\{\by\}.\]
If the system~$\Sigma$ is as in~\eqref{eq:sigma}, then, as we discussed, $I_\Sigma$ is a prime differential ideal and, as a result, one can remove redundant components in~\eqref{eq:dec} and obtain $r = 1$ and $\widetilde{C_1} \subset C_1 \subset I_\Sigma$, see \cite[Theorem~3.2.1]{KR}. If $\Sigma$ is more general and $I_\Sigma$ is not necessarily a prime differential ideal, consider the set $\widetilde{C} = \prod_{i=1}^r \widetilde{C_i} \subset I_\Sigma$.
By dividing every element of $\widetilde{C}$ by an element of $\mathbb{C}(\bk)$, for every $c \in \widetilde{C}$,
we pick a representation of the form
\begin{equation}\label{eq:grouping}
c = p_{0,c} + \sum_{j=1}^{q_c}a_{j,c}\cdot p_{j,c},
\end{equation}
where $p_0,\ldots,p_{q_c} \in \mathbb{C}\{\by\}$ and $a_1\ldots,a_{q_c} \in \mathbb{C}(\bk)$. Denote
$\Wr_c = \Wr(p_{1,c},\ldots, p_{q_c,c})$.
\begin{remark}\label{rem:grouping}
Such a grouping~\eqref{eq:grouping} is implemented in {\sc Maple} in~\cite{allident_code}, the function {\tt DecomposePolynomial}.
\end{remark}

\begin{proposition}
\label{prop:identicbc}
Let $c \in \widetilde{C}$, defined above. If, for all  
 $\widehat{\bk} \in \Omega$ and
 solutions  $(\widehat{\bv},\widehat{\by})$ of~\eqref{eq:PDEarb} with the parameter values $\widehat{\bk}$,
there exists a point $(t_0,x_0) \in \mathcal{D}$ such that the matrix  $\Wr_c\left(\widehat{\by}\right)(t_0,x_0)$ with respect to $\partial_x$ or $\partial_t$ is invertible, then 
\begin{itemize}\item
all $a_{1,c},\ldots,a_{q_c,c}$ are identifiable.\item Moreover, let $k_{i_1},\ldots, k_{i_c}$ be the parameters that explicitly appear in $a_{1,c},\ldots,a_{q_c,c}$. If the ``coefficient map''
\[\varphi_c: (k_{i_1},\ldots, k_{i_c})\mapsto (a_{1,c},\ldots,a_{q_c,c})\] is injective, then the parameters $k_{i_1},\ldots, k_{i_c}$ are identifiable.
\end{itemize}
\end{proposition}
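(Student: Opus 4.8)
The plan is to reduce the statement to an instance of the argument already used in Lemma~\ref{lem:support} and Theorem~\ref{thm:identifiability}, replacing the ``generic solution'' $(\hat\by,\hat\bw)$ there by an arbitrary analytic solution $(\widehat\bv,\widehat\by)$ of~\eqref{eq:PDEarb} and the abstract non-vanishing of a Wronskian in the field $\mathcal F$ by the pointwise non-vanishing of $\Wr_c(\widehat\by)$ at some $(t_0,x_0)\in\mathcal D$. Fix $c\in\widetilde C$ with the chosen representation~\eqref{eq:grouping}, so $c = p_{0,c} + \sum_{j=1}^{q_c} a_{j,c}\, p_{j,c}$ with $p_{j,c}\in\CC\{\by\}$ and $a_{j,c}\in\CC(\bk)$. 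First I would observe that since $c\in\widetilde C\subset I_\Sigma$, every solution $(\widehat\bv,\widehat\by)$ of~\eqref{eq:PDEarb} satisfies $c(\widehat\by) = 0$ identically on $\mathcal D$; here I use that $\widetilde{C_i}\subset C_i$ and that the $C_i$'s define $I_\Sigma$ via~\eqref{eq:dec}, so membership in $\widetilde C$ forces vanishing on any actual solution (this is precisely why we intersected with $\CC(\bk)\{\by\}$ — it removes dependence on the unobserved $\bv$).

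Next, I would differentiate the identity $c(\widehat\by) = p_{0,c}(\widehat\by) + \sum_j a_{j,c}\, p_{j,c}(\widehat\by) = 0$ with respect to $\partial_x$ (or $\partial_t$, whichever the hypothesis supplies) a total of $0,1,\dots,q_c-1$ times. Because the $a_{j,c}$ are constants (they lie in $\CC(\bk)$ and $\bk$ is parameter-independent of $t,x$), this produces the linear system
\[
\begin{pmatrix} p_{0,c}(\widehat\by)\\ \partial_x p_{0,c}(\widehat\by)\\ \vdots \\ \partial_x^{q_c-1} p_{0,c}(\widehat\by)\end{pmatrix}
= - \,\Wr_c(\widehat\by)\begin{pmatrix} a_{1,c}\\ a_{2,c}\\ \vdots \\ a_{q_c,c}\end{pmatrix}
\]
as an identity of functions on $\mathcal D$. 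Evaluating at the point $(t_0,x_0)$ guaranteed by the hypothesis, where $\Wr_c(\widehat\by)(t_0,x_0)$ is invertible, and applying Cramer's rule, I get
\[
a_{j,c} = \frac{\det M_j(t_0,x_0)}{\det \Wr_c(\widehat\by)(t_0,x_0)},
\]
where $M_j$ is $\Wr_c$ with its $j$-th column replaced by $-(p_{0,c}(\widehat\by),\partial_x p_{0,c}(\widehat\by),\dots)^{\!\top}$. The right-hand side is built entirely from $\widehat\by$ and its derivatives, evaluated at a point of $\mathcal D$. Now take any two parameter vectors $\widehat\bk_1,\widehat\bk_2\in\Omega$ and solutions $(\widehat\bv_1,\widehat\by_1)$, $(\widehat\bv_2,\widehat\by_2)$ with $\widehat\by_1 = \widehat\by_2$ as functions on $\mathcal D$; applying the Cramer formula to each (using a common point $(t_0,x_0)$ where both Wronskians are invertible — such a point exists since the two solutions share the same output function $\widehat\by$, so $\Wr_c(\widehat\by_1) = \Wr_c(\widehat\by_2)$) gives $a_{j,c}(\widehat\bk_1) = a_{j,c}(\widehat\bk_2)$ for every $j$. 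By Definition~\ref{def:idinbc} this is exactly identifiability of each $a_{j,c}$, proving the first bullet.

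For the second bullet, suppose the coefficient map $\varphi_c\colon (k_{i_1},\dots,k_{i_c})\mapsto (a_{1,c},\dots,a_{q_c,c})$ is injective. Given $\widehat\bk_1,\widehat\bk_2\in\Omega$ and solutions producing the same $\widehat\by$, the first bullet gives $a_{j,c}(\widehat\bk_1) = a_{j,c}(\widehat\bk_2)$ for all $j$, i.e.\ $\varphi_c$ takes the same value on the relevant coordinates of $\widehat\bk_1$ and $\widehat\bk_2$; injectivity then forces $k_{i_\ell}(\widehat\bk_1) = k_{i_\ell}(\widehat\bk_2)$ for each $\ell$, which is the identifiability of $k_{i_1},\dots,k_{i_c}$. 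I expect the main obstacle to be the bookkeeping in the first paragraph: carefully justifying that every element of $\widetilde C$ really does vanish on every solution in the prescribed function class (handling the non-prime case where $I_\Sigma = \bigcap_i [C_i]\!:\!H_{C_i}^\infty$ and $\widetilde C = \prod_i \widetilde{C_i}$, so that the relevant product, not each factor, is what lies in $I_\Sigma$), and ensuring the point $(t_0,x_0)$ can be chosen simultaneously for both solutions — which is immediate once one notes $\Wr_c(\widehat\by)$ depends only on the output. The differentiation-and-Cramer core is then a verbatim adaptation of Lemma~\ref{lem:support}, with pointwise evaluation standing in for working inside the fraction field $\mathcal F$.
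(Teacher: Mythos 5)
Your proposal is correct and follows essentially the same route as the paper's own proof: differentiate the relation $c(\widehat\by)=0$ to obtain the linear system with matrix $\Wr_c(\widehat\by)$, evaluate at a point where it is invertible, and solve for the $a_{j,c}$, with the second bullet following from injectivity of $\varphi_c$. The only differences are presentational (you argue directly where the paper argues by contradiction, and you make explicit the shared evaluation point and the vanishing of $c$ on solutions, both of which the paper leaves implicit).
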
 
\begin{proof}
Suppose that there exists $j$ such that $a_{j,c}$ is not identifiable according to Definition~\ref{def:idinbc}. Let $\widehat{\bk}_1, \widehat{\bk}_2 \in \Omega$ and solutions $(\widehat{\bv}_1,\widehat{\by}_1)$ and  $(\widehat{\bv}_2,\widehat{\by}_2)$ of~\eqref{eq:PDEarb}  be such that
\begin{equation}
\label{eq:nonident}
\widehat{\by}_1 = \widehat{\by}_2\quad \text{but}\quad a_{j,c}(\widehat{\bk}_1)\ne a_{j,c}(\widehat{\bk}_2).
\end{equation}
Consider the square system of linear equations
\begin{equation}\label{eq:Wraj}
\begin{cases}
\sum\limits_{j=1}^{q_c}a_{j,c}\cdot p_{j,c} =-p_{0,c} \\
\sum\limits_{j=1}^{q_c}a_{j,c}\cdot \partial(p_{j,c}) =-\partial(p_{0,c})\\
\qquad\vdots\\
\sum\limits_{j=1}^{q_c}a_{j,c}\cdot \partial^{q_c-1}(p_{j,c}) =-\partial^{q_c-1}(p_{0,c})
\end{cases}
\end{equation}
in the unknowns $a_{1,c},\ldots,a_{q_c,c}$, whose matrix is $\Wr_c(\by)$. Assume that, for all $\widetilde{\bk} \in \Omega$ and solutions $(\widetilde{\bv},\widetilde{\by})$ of~\eqref{eq:PDEarb} with parameters $\widetilde{\bk}$, there exists $(t_0,x_0) \in \mathcal{D}$ such that  the matrix  $\Wr_c\left(\widetilde{\by}\right)(t_0,x_0)$ with respect to $\partial_x$ ($\partial_t$ is considered similarly) is invertible. Substituting $\widehat{\by}_1$ or $\widehat{\by}_2$ and the corresponding point $(t_0,x_0)$  into~\eqref{eq:Wraj} and solving returns $\left(a_{1,c}(\widehat{\bk}_1),\ldots,a_{q_c,c}(\widehat{\bk}_1)\right)$ or $\left(a_{1,c}(\widehat{\bk}_2),\ldots,a_{q_c,c}(\widehat{\bk}_2)\right)$, respectively. By the first part of~\eqref{eq:nonident}, these tuples are equal, which contradicts with the second part of~\eqref{eq:nonident}.

For the proof of the second item, suppose that the coefficient map $\varphi_c$ is injective. If $k_{i_j}$ is not identifiable, then there exist $\widehat{\bk}_1, \widehat{\bk}_2 \in \Omega$ and solutions $(\widehat{\bv}_1,\widehat{\by}_1)$ and $(\widehat{\bv}_2,\widehat{\by}_2)$ of~\eqref{eq:PDEarb} with  parameter values $\widehat{\bk}_1$ and $\widehat{\bk}_2$ such that
\[
\widehat{\by}_1 = \widehat{\by}_2\quad \text{but}\quad \proj_{k_{i_j}}(\widehat{\bk}_1)\ne \proj_{k_{i_j}}(\widehat{\bk}_2).
\]
Hence, by the injectivity of $\varphi_c$, there exists $\ell$ such that $a_{\ell,c}(\widehat{\bk}_1)\ne a_{\ell,c}(\widehat{\bk}_2)$, which contradicts the identifiability of $a_{\ell,c}$ established above.
\end{proof}

\section{Algorithms}\label{sec:algorithm}

The correctness of Algorithms~\ref{alg:algebraic} and~\ref{alg:strong}, which we present in this section, follows from Theorem~\ref{thm:identifiability} and Proposition~\ref{prop:identicbc}, respectively.

\begin{algorithm}[!h]
\caption{Assessing identifiability in the sense of Defintion~\ref{def:identifiability}}\label{alg:algebraic}
\begin{description}
    \item[Input] A rational PDE system of the form
    \[
    \begin{cases}
      \partial_t{\bv}=\bff \left( \bk, \bw, \bv, \partial_{x}{\bv}, \ldots, \partial_{x}^h \bv \right) \\
      \by=\bg\left( \bk, \bw, \bv, \partial_{x}{\bv}, \ldots, \partial_{x}^h \bv\right).
    \end{cases}
    \]

    \item[Output] A list of generators of the field of identifiable functions of the system and, for each parameter, $k_i$ whether it is identifiable or not.
\end{description}

\begin{enumerate}[label = \textbf{(Step~\arabic*)}, leftmargin=*, align=left, labelsep=2pt, itemsep=0pt]
    \item Using
  \texttt{RosenfeldGroebner} with an elimination ranking $\bv > \by,\bw$,  eliminate the variables $\bv$ and obtain a set $\mathcal{S}$ of IO-equations of the input PDE system.
  \item Let $\mathcal{C} \subset \mathbb{C}(\bk)$ be the set of the coefficients of $\mathcal{S}$.
  \item \textbf{Return}:
  \begin{itemize}
      \item For the generators of the field of identifiable functions, the set $\mathcal{S}$. It can be additionally simplified using~\cite{allident_code} by applying {\tt FieldToIdeal} and then {\tt FilterGenerators} (or~\cite[Section~5]{Meshkat2009});
      \item For each $k_i \in \bk$, the result of the test $k_i \in \mathbb{C}(\bk)$ (using, e.g., \cite[Section 1.3]{Rational}, see~\cite{allident_code} for implementation).
  \end{itemize}
\end{enumerate}
\end{algorithm}

\begin{algorithm}[!h]
\caption{Approach to establishing strong identifiability  (Defintion~\ref{def:idinbc})}\label{alg:strong}
\begin{description}
    \item[Input] 
    A system of rational PDEs of the form
    \[
    \mathbf{F}\left(\bk, \bv\right) = 0, \quad \mathbf{y} = \mathbf{G}(\bk, \bv).
    \]
    together with some set of requirements $\mathcal{R}$ (regularity, boundary conditions, etc.) on the states $\bv$ and a domain $\Omega$ for the parameter values.

    \item[Output] For each parameter $k_i$, either returns that it is strongly identifiable, or that the test was inconclusive.
\end{description}

\begin{enumerate}[label = \textbf{(Step~\arabic*)}, leftmargin=*, align=left, labelsep=2pt, itemsep=0pt]
    \item Using
  \texttt{RosenfeldGroebner} with an elimination ranking $\bv > \by$,  eliminate the variables $\bv$, and obtain a set $\mathcal{S} = \{S_1, \ldots, S_\ell\}$ of IO-equations of the input PDE system.
  We denote the set of their coefficients by $\mathcal{C}$.
  \item For each $j$, $1 \leqslant j\leqslant \ell$, write $S_j$ in the form 
  \[
    F_j = p_{0,j} + \sum_{i=1}^{L_j}a_{i, j}\cdot p_{i, j},
  \]
  with $a_{i, j} \in \mathbb{C}(\bk)$ and $p_{i, j} \in \mathbb{C}\{\by\}$ (see Remark~\ref{rem:grouping}).
  Compute the Wronskians $\operatorname{Wr}_{j,x}$ and $\operatorname{Wr}_{j,t}$ of $p_{1, j}, \ldots, p_{L_j, j}$ with respect to $\partial_x$ and $\partial_t$, respectively.
  \item For each $j$, $1 \leqslant j\leqslant \ell$,
compute the normal forms $N_{j,x}$ and  $N_{j,t}$  of $\operatorname{Wr}_{j,x}$ and  $\operatorname{Wr}_{j,t}$, respectively, w.r.t. the input PDE system,
and obtain a sufficient condition 
  $\mathcal{I} = \bigwedge_j(N_{j,x}\ne 0 \lor N_{j,t}\ne 0)$ for the identifiability of $\mathcal{C}$.
  \item Check if $\mathcal{I}$ holds under the requirements $\mathcal{R}$ and for the parameters in $\Omega$\footnote{Since there are almost no assumptions on the requirements, this step is non-algorithmic. We show, how to do this in practice in the next section.}.
  \item If $\mathcal{I}$ does not hold, \textbf{return} ``inconclusive'' for every parameter.
  \item If $\mathcal{I}$ does hold, \textbf{return} ``identifiable'' for every parameter $k_i$ such that $k_i \in \mathbb{C}(\bk)$ (can be checked with, e.g., \cite[Section 1.3]{Rational}, see~\cite{allident_code} for implementation), and ``inconclusive'' for the rest of the parameters.
\end{enumerate}
\end{algorithm}

\section{Examples of PDE identifiability}\label{sec:computations}

In this section, we consider several models arising in mathematical biology (Section~\ref{subsec:math-bio-pde}), as well as other natural phenomena (Section~\ref{subsec:applied-maths-pdes}). We use Proposition~\ref{prop:identicbc} and Algorithm~\ref{alg:strong},  to test the identifiability of the parameters in these models. 
\subsection{PDEs arising in mathematical biology}\label{subsec:math-bio-pde}
We study four well-known, and increasingly complex, PDE systems in mathematical biology \cite{murray-book1}. 
All of these examples use parabolic PDEs.
\begin{example}[Scalar Reaction--Diffusion Equation]\label{subsubsec:nutrient}

We start by considering the following reaction--diffusion equation in which $c$ represents the concentration of a diffusible nutrient, such as oxygen or glucose, which is consumed at a rate which is an increasing, saturating function of its concentration~\cite{murray-book1}:
\begin{equation}\label{eq:nutrient}
  \partial_tc(x,t) =d\partial_x^2 c(x,t) +\frac{\lambda c(x,t)}{c_0+c(x,t)},
\end{equation}
where the set of parameters is $\{d, c_0, \lambda\}$, and the boundary
conditions are given by:
\begin{align}
  \partial_x c(x,t)(0,t) = & 0 \label{eq:nutrient-bc1}\\
  c(R,t) = & 1 \label{eq:nutrient-bc2}\\
  c(x,0) = & 1, \quad 0 \leq x \leq R. \label{eq:nutrient-bc3}
\end{align}

Following Definition~\ref{def:idinbc}, our field is $\mathbb{K} = \RR$ and $\Omega = \RR_+^3$.
We follow Algorithm~\ref{alg:strong}  to check
the identifiability of the parameters of~\eqref{eq:nutrient}. Details of
the computations can be found in the {\sc Maple} worksheet \texttt{nutrient.mpl}\footnote{\url{https://github.com/rahkooy/PDE-Identifiability}}.

Considering the numerator of the rational  function~\eqref{eq:nutrient}, 
i.e., the differential polynomial 
\begin{equation}\label{eq:nutrient-in-out}
    -d c(x,t)\partial_x^2 c(x,t) -d c_0\partial_x^2c(x,t)+ 
    c(x,t)\partial_tc(x,t) + c_0 \partial_t c(x,t) - \lambda c(x,t),
\end{equation}
we collect four monomials of~\eqref{eq:nutrient-in-out}
whose coefficient (with respect to the elements in $\CC(d,c_0,\lambda)$) is not 1, and compute $W$,  the determinant of their Wronskian to check if it is nonsingular. 
To do so, we compute the normal form of $W$ (using Rosenfeld-Gr\"obner),
and check when its coefficients are zero.

Note that if we obtain nonsingularity of the Wronskian from a subset of the coefficient, it implies that the Wronskian is nonsingular. The first ten coefficients are zero if and only if 
\begin{equation}
    \partial_tc(x,t) = 0 \quad \text{or} \quad \partial_x c(x,t)=0.
\end{equation}

So we investigate the following two cases. 

\paragraph{Case 1. $\partial_tc(x,t)=0$}
In this case, the function $c$ does not depend on $t$ anymore, and therefore,  $c(x,t)=c(x)$. Using the boundary conditions~\eqref{eq:nutrient-bc2} and~\eqref{eq:nutrient-bc3}, 
we obtain $c(x)=1$, a constant  function. On the other hand, 
considering $\partial_tc(x,t)=0$, the differential polynomial ~\eqref{eq:nutrient-in-out} will be simplified to the following 
polynomial (which is the LHS of  ODE, rather than the original PDE):
  \begin{equation}\label{eq:nutrient-odex}
    -d c(x)\partial_x^2 c(x) -d c_0\partial_x^2c(x) - \lambda c(x).
  \end{equation}

Substituting $c(x)=1$  into~\eqref{eq:nutrient-odex}, we obtain
$- \lambda$, which is not zero, as the parameters are assumed to be positive. So this case cannot happen.

\paragraph{Case 2. $\partial_xc(x,t)=0$}
This condition means that $c$ does not depend on $x$, hence, $c(x,t)=c(t)$. So substituting this condition into the PDE~\eqref{eq:nutrient}, we obtain the following ODE:
  \begin{equation}\label{eq:nutrient-odet}
    \partial_tc(t) = \frac{ \lambda c(t)}{c_0+c(t)}.
  \end{equation}
Using boundary condition~\eqref{eq:nutrient-bc2}, i.e., $c(R,t)=1$,
we obtain $\partial_t c(t) =0$. Therefore
ODE~\eqref{eq:nutrient-odet} is equal to zero if and only if
\begin{equation}
\frac{ \lambda c(t)}{c_0+c(t)} = \frac{ \lambda}{c_0+1}=0,
\end{equation}
which can happen only if $\lambda =0$. This is impossible according
to our assumption on the positivity of the parameters, i.e., $\Omega = \RR_+^3$.  So Case 2
neither can happen.

In conclusion, none of the cases considered above can happen. This means
that the Wronskian matrix is non-singular. So, by Proposition~\ref{prop:identicbc}, the coefficients of the
PDE~\eqref{eq:nutrient-in-out}, i.e., $\{d, d c_0, c_0, \lambda\}$,
are all identifiable. 
\end{example}  
\begin{example}[Fisher's Equation]
Next, we consider
Fisher's equation that describes the diffusive spread of a species which undergoes logistic growth~\cite{murray-book1}. 
Fisher's equation is given by the following PDE:
\begin{equation}\label{eq:fisher}
  \partial_tn(x,t) = d \partial_x^2 n(x,t) + r
  n(x,t)\left(1-\frac{n(x,t)}{k} \right),
\end{equation}
where $n(x,t)$ is the input function, $x,t$  are the variables, and
$d, r, k$ are the parameters. The boundary conditions are given by

\begin{equation}\label{eq:fisher-bc}
  n(x,t)  \to 
            \begin{cases}
              k, \quad x \to -\infty \\
              0, \quad x \to \infty,
            \end{cases} 
\end{equation}
and the initial condition is given by prescribed $u(x,0)$ as follows, which is compatible with the boundary  conditions~\eqref{eq:fisher-bc}.
\begin{equation}\label{eq:fisher-ic}
  n(x,0)  = n_0(x)  = \frac{ke^{-\alpha x}}{1+e^{-\alpha x}}.
\end{equation}

Following Definition~\ref{def:idinbc}, our field is $\mathbb{K} = \RR$ and $\Omega = \RR_+^3$.
Below we follow Algorithm~\ref{alg:strong} for checking the identifiability of the parameters. Details of the computations can be found in the {\sc Maple} worksheet \texttt{fisher.mpl}\footnote{\url{https://github.com/rahkooy/PDE-Identifiability}}.
Simplifying~\eqref{eq:fisher}, we obtain
  \begin{equation}\label{eq:fisher-simple}
  \partial_tn(x,t) - d \partial_x^2 n(x,t) - rn(x,t) + \frac{rn(x,t)^2}{k},
\end{equation}
from which we collect all of the monomials, except for $\partial_tn(x,t)$   whose coefficient is $1$, and compute $W$, the determinant of their Wronskian. 
    Using Rosenfeld-Gr\"obner, we compute the normal form of the determinant of $W$, which is a rational function. We would like to check if this rational function is identically zero. Equivalently, we would like to check if the numerator of this rational function, say $N$, is zero. For this, we consider the coefficients of $N$ as a polynomial in terms of parameters. There are four such coefficients (which are differential polynomials in terms of variables $x$ and $t$). 
    
Computing the coefficients of the normal form of $W$, one can see that the coefficients are zero if and only if 
\begin{equation}\label{eq:fisher-coeff-rg}
    \partial_xn(x,t)=0 \quad \text{or} \quad \partial_tn(x,t)=0,
\end{equation}
which leads to the following cases.
  
\paragraph{Case 1. $\partial_xn(x,t)=0$.}
In this case, the function $n(x,t)$ does not depend on $x$, hence  $n(x,t)=n(t)$. However, the initial condition~\eqref{eq:fisher-ic}   implies that $n$ is a function of $x$ unless $\alpha=0$, which cannot happen as the parameters are supposed to be positive.

\paragraph{Case 2. $\partial_tn(x,t)=0$.}
This case implies that $n$ does not depend on $t$, hence, $n(x,t)=n(x)$. From the initial condition~\eqref{eq:fisher-ic}, we have that 
\begin{equation}\label{eq:fisher-nx}
    n(x) = n_0(x) = \frac{k e^{-\alpha x}}{1 + e^{-\alpha x}}.
\end{equation}

We substitute $\partial_tn(x,t)=0$ into Fisher's Equation~\eqref{eq:fisher} to obtain the following ODE:
\begin{equation}\label{eq:fisher-odet}
    -d \partial_x^2 n(x) - rn(x) + \frac{rn(x)^2}{k} = 0.
\end{equation}
Equation~\eqref{eq:fisher-nx} must satisfy ODE~\eqref{eq:fisher-odet}. Evaluating the ODE at $n(x)$, we obtain
\begin{equation}\label{eq:fisher-evalnx}
    \frac{\left((\alpha^2d - r)e^{-\alpha x} - \alpha^2d -
        r\right)ke^{-\alpha x}}{\left(1 + e^{-\alpha x}\right)^3}.
\end{equation}
Equation~\eqref{eq:fisher-evalnx} is zero if and only if its numerator is zero, which is the case if and only if either $k=0$ (which is not  possible according to our assumption that parameters are positive), or
\begin{equation}
    (\alpha^2d-r) e^{-\alpha x}-\alpha^2 d-r =0.
\end{equation}
The above equation can happen only if the coefficient of the exponential term, as well as the constant term are zero. This leads to the equations
\begin{align}
    \alpha^2 d-r &=0 \\
    -\alpha^2 d-r & = 0,
\end{align}
which leads to $2 \alpha^2d=0$. But this cannot happen as $\alpha$  and $d$ are assumed to be non-zero.

So from the above discussions, we conclude that none of the above two cases can happen, hence, the coefficients of the Fisher  equation~\eqref{eq:fisher}, that is, by Proposition~\ref{prop:identicbc}, these functions of parameters are identifiable: $d$, $\frac{r}{k}$ and $\frac{1}{k}$. By the second part of Proposition~\ref{prop:identicbc}, the parameters $d,r,k$ are identifiable.
\end{example}
  
\begin{example}[Coupled Reaction-Diffusion Equations]\label{ex:lotka-volterra}
Next, we consider the following system of two coupled reaction--diffusion equations. Here species $u$ and $v$ undergo random motion and logistic growth while competing for resources~\cite{murray-book1}.
The PDEs are: 
\begin{equation}\label{eq:lv}
  \begin{cases}
    \partial_t u(x,t) =d_1\partial_x^2 u(x,t)
    +u(x, t)\left(a_1 - b_1  u(x, t) - c_1  v(x, t)\right),    \\
   \partial_t v(x,t) =d_2\partial_x^2 v(x,t) 
    +v(x,t)\left(a_2 - b_2 u(x, t) - c_2 v(x, t)\right) ,\\
    y_1(x,t) = u(x,t),\\
    y_2(x,t) = v(x,t).
  \end{cases}
\end{equation}
The model parameters are $a_1,b_1,c_1,d_1,a_2,b_2,c_2,d_2$. The boundary conditions are given by
\begin{align}
  u(x,t)  & \to 
            \begin{cases}
              \frac{a_1}{b_1} \quad x \to -\infty \\
              0 \quad x \to \infty 
            \end{cases} \label{eq:lv-bc1}\\
  v(x,t) & \to
           \begin{cases}
             \frac{a_2}{b_2} \quad x \to \infty \\
             0 \quad x \to -\infty
           \end{cases} \label{eq:lv-bc2}
\end{align}
and the initial conditions are given by prescribed $u(x,0)$ and
$v(x,0)$, e.g.,
\begin{align}
  u(x,0) & = u_0(x)  = \frac{\left(\frac{a_1}{b_1}\right) e^{-\alpha_1
           x}}{1+e^{-\alpha_1 x}} \label{eq:lv-ic1} \\
  v(x,0) & = v_0(x)  = \frac{\left(\frac{a_2}{b_2}\right) e^{-\alpha_2
           x}}{1+e^{-\alpha_2 x}}\cdot \label{eq:lv-ic2}
\end{align}
We note that for the special case with $a_1 < 0 = b_1 = b_2$ equations (\ref{eq:lv}) reduce to the classical Lotka-Volterra equations with random motion. 

Following Definition~\ref{def:idinbc}, our field is $\mathbb{K} = \RR$ and $\Omega = \RR_+^{8}$.
Below we follow our method~\eqref{alg:strong} for testing the identifiability of the parameters. Details of the computations can be found in the {\sc Maple} worksheet \texttt{LV-PDE.mpl}\footnote{\url{https://github.com/rahkooy/PDE-Identifiability}}.

For the first equation, the normal form of the determinant of the Wronskian yields a polynomial with 34 coefficients. 
We have considered 22 first coefficients and obtained that the determinant of the Wronskian is zero if and only if
  \begin{equation}\label{eq:lv-cf-decompose}
    \partial_x u(x,t)=0 \quad \text{or} \quad 
    \partial_t u(x,t)=\partial_t v(x, t)=0 \quad \text{or} \quad v(x, t)=0,
  \end{equation}

which leads us to the following cases.
\paragraph{Case 1. $\partial_x u(x,t)=0$}
In this case, the derivatives of $u(x,t)$ with respect to $x$ is zero, which means that the function $u$ does not
  depend on $x$, i.e., $u(x,t)=u(t)$. However,
  according to the initial condition~\eqref{eq:lv-ic1}, $u$  depends on $x$, unless
  $-\alpha_1=0$, in which case $u=\frac{a_1}{b_1}$ is a constant function. Using the boundary
  condition~\eqref{eq:lv-bc1} when
  $x \rightarrow -\infty$,  functions $u$ becomes
  zero. This implies that $\frac{a_1}{b_1} = 0$,
  which cannot happen as we have assumed that the parameters are
  nonzero. So this case does not happen either.

\paragraph{Case 2. $\partial_t u(x, t) = \partial_t v(x, t)=0$}
  In this case, $u$ and $v$ do not depend on it, i.e., $u(x,t)=u(x)$
  and $v(x,t)=v(x)$. Since the initial conditions~\eqref{eq:lv-ic1}
  and~\eqref{eq:lv-ic2} only depend on $x$, therefore,
  \begin{align}
    u(x,t) & = u(x) =  \frac{\left(\frac{a_1}{b_1}\right) e^{-\alpha_1
             x}}{1+e^{-\alpha_1 x}} \label{eq:lv-ux} \\
    v(x,t) & = v(x)  = \frac{\left(\frac{a_2}{b_2}\right) e^{-\alpha_2
             x}}{1+e^{-\alpha_2 x}}.\label{eq:lv-vx}
  \end{align}
  On the other hand, substituting
  $\partial_t u(x, t) = \partial_t v(x, t)=0$ into the original PDE
  system~\eqref{eq:lv}, one obtains the following two ODEs:
  \begin{align}
    & -d_1 \partial_x^2 u(x) - u(x)\left(a_1 - b_1 u(x) - c_1 v(x)\right) = 0 \label{eq:lv-ode1} \\
    & -d_2 \partial_x^2 v(x) - v(x)\left(a_2 - b_2 u(x) - c_2 v(x)\right) = 0. \label{eq:lv-ode2} 
  \end{align}
  Therefore, $u(x)$ and $v(x)$ should satisfy the ODEs
  ~\eqref{eq:lv-ode1} and~\eqref{eq:lv-ode2}. Having
  substituted~\eqref{eq:lv-ux} into the ODEs~\eqref{eq:lv-ode1}, one
  obtains
  \begin{equation}
    \frac{1}{b_1(1 + e^{-ax})^3}
    \left(
      a_1e^{-ax}
      \left(
        (d_1a^2 + 2c_1v(x) - a_1)e^{-ax} - d_1a^2 + v(x)e^{-2ax}c_1 + c_1v(x) - a_1
      \right)
    \right).
  \end{equation}
  Finally, substituting~\eqref{eq:lv-vx} into the above, we obtain a rational function of the form
  \begin{equation}\label{eq:lv-ode-plugged}
    \frac{1}{c_2(1 + e^{bx})b_1(1 + e^{-ax})^3} F(x)  a_1e^{-ax},
  \end{equation}
  where
  \begin{equation}
    \begin{aligned}
      F(x) = & ( -a_2c_1e^{-x(2a-b)} + \left( (-d_1a^2 + a_1)c_2 - 2a_2c_1
      \right)e^{x(-a + b)} \\
      & - c_2(d_1a^2 - a_1)e^{-ax} \\
      & + \left(c_2(d_1a^2 + a_1) - a_2c_1\right)e^{bx} \\
      & + c_2(d_1a^2 +a_1)
    \end{aligned}
  \end{equation}
  The rational function in~\eqref{eq:lv-ode-plugged} is identically
  zero if and only if either $a_1 (e)^{-a x}=0$ or $F(x)=0$. Since
  $a_1 \ne 0$, it cannot happen that $a_1 (e)^{-a x}=0$. The second
  case, $F(x)=0$, can happen if the exponential terms appearing in
  $F(x)$ are linearly dependent, which only can happen if the
  exponents of those terms are equal.  It is enough to consider the
  cases in which the first exponent is equal to some other
  exponents. This way, we obtain below four cases, each of which is
  discussed and it has been shown that none can happen.

  For convenience, set
  \begin{align}
    u &:=-a_2c_1, \\
    w &:=-d_1 a^2+a_1 \\
    t & := d_1 a^2+a_1.
  \end{align}
  Then the coefficients of $F(x)$ will be $u, wc_2-2 u, c_2w, c_2t-u$ and $c_2t$.
  Also, let $y=e^x$. The four cases are the following.
  \paragraph{Case 2.1. $-2a+b=-a+b$}
  This implies that $a=0$ which cannot happen.

  \paragraph{Case 2.2. $-2a+b=-a$}
  This means that $a=b$, and the exponents  become $y^{-a},y^0, y^{-a}, y^a, y^0$.
  collecting coefficients of equal exponents, we obtain that
  \begin{align}
    u+c_2w & =0, \\
    c_2w-2 u+c_2t & =0, \\
    c_2t-u & =0.
  \end{align}
  Simplifying the above we have $2u=0$ which is not possible.

  \paragraph{Case 2.3. $-2a+b=b$}
  This case implies that $a=0$ which is not possible.

  \paragraph{Case 2.4. $-2a+b=0$}
  From the conditions of this case, we obtain that $b=2a$, and the
  exponents will be $y^0, y^a, y^{-a}, y^{2a}, y^0$.  since $y^a$ and
  $y^{2a}$ are linearly independent, if their coefficients do not kill
  each other, then we are done. This is the case as the coefficients
  are $c_2w-2u$ and $c_2w$.

Considering coefficients of the normal form of the determinant of the second Wronskian, we obtain the following conditions:
  \begin{align}
    & \frac{\partial^2}{\partial_x \partial_t} u(x, t)= \partial_x v(x,t)=0 \quad \text{or} \label{eq:lv-cf-decompose2-1}\\ 
    & \partial_t u(x,t) = \partial_t v(x, t)=0 \quad \text{or} \label{eq:lv-cf-decompose2-2} \\
    & u(x, t)=0 \quad \text{or} \quad  \label{eq:lv-cf-decompose2-3} \\ 
    & v(x, t)=0. \label{eq:lv-cf-decompose2-4}
  \end{align}
Three out of the four cases in~\eqref{eq:lv-cf-decompose2-1}, \eqref{eq:lv-cf-decompose2-2}, \eqref{eq:lv-cf-decompose2-3}, and \eqref{eq:lv-cf-decompose2-4} have been considered earlier. The only remaining case $\frac{\partial^2}{\partial_x \partial_t} u(x, t)= \partial_x v(x,t)=0$. Similar to the argument for $\partial_x u(x,t)=0$, one can check that $\partial_x v(x,t)=0$ cannot happen. So this case is also impossible.

In conclusion, none of the Wronskians can be identically singular. Hence, by Proposition~\ref{prop:identicbc},
the coefficients of the PDEs~\eqref{eq:lv} are identifiable. Since the coefficients of the system are the parameters, 
all parameters are identifiable.
\end{example}


\begin{example}[Single Output Lotka Volterra]\label{ex:single-output-LV}
  In this example, we consider the single output version of the Coupled Reaction-Diffusion Equations studied in Example~\ref{ex:lotka-volterra}:
  \begin{equation}\label{eq:lv_single_out}
  \begin{cases}
    \partial_t u(x,t) =d_1\partial_x^2 u(x,t)
    +u(x, t)\left(a_1 - b_1  u(x, t) - c_1  v(x, t)\right),    \\
   \partial_t v(x,t) =d_2\partial_x^2 v(x,t) 
    +v(x,t)\left(a_2 - b_2 u(x, t) - c_2 v(x, t)\right) ,\\
    y(x,t) = u(x,t).
  \end{cases}
\end{equation}

  The initial and boundary conditions are the same as the Coupled Reaction-Diffusion Equations and are given by~\eqref{eq:lv-bc1},~\eqref{eq:lv-bc2},~\eqref{eq:lv-ic1}, and~\eqref{eq:lv-ic2}. Similarly, the ground field is $\mathbb{K} = \RR$ and $\Omega = \RR_+^{8}$.
  
  For this example, we initially tried to apply Algorithm~\ref{alg:strong}. We  used~\cite{allident_code} applying {\tt FieldToIdeal} and then {\tt FilterGenerators} to regroup the monomials and to simplify the system, and obtained a 13 $\times$ 13 matrix instead of the original 17 $\times$ 17 Wronskian. However, computing the determinant (and then normal form) of this simpler matrix was not possible within a reasonable time and memory. 
  Hence, instead of computing directly the normal form of the determinant of the Wronskian, we used the following steps to make computations faster: 
  \begin{enumerate} 
      \item First,  we compute the normal form of each entry of the Wronskian 
      \item Secondly, using the equations of the system~\eqref{eq:lv_single_out}, we eliminate the derivatives of $u(x,t)$ and $v(x,t)$ with respect to $t$, so that the equations only depend on $u, v$ and their derivatives with respect to $x$. Then we could substitute the initial conditions efficiently to the system (substituting initial conditions into the original system is very time-consuming).
      \item Third, we substituted random values for parameters as computations with symbolic parameters were not possible in a reasonable time. 
      \item Lastly, we computed the numeric value of the determinant of the Wronskian, evaluated in a generic point $x$. 
  \end{enumerate}

  While computing the determinant of the Wronskian and its normal form did not finish within three days, the above optimizations made the computation finish in less than four hours, most of which is spent on normal form computations. Having stored the output of the normal forms, one can carry on the computations for different values of $x$ and the parameters in the initial conditions in just less than a second.
  
  As the determinant was not zero in a generic point, we conclude that the determinant is not a zero function for random nonzero values of the parameters. Although this does not prove strong identifiability, however, we succeeded in presenting strong numeric evidence that generically the parameters are identifiable. 
  Additional studies can check particular values of the parameters that vanish the Wronskian.
  We note that due to the lack of resources for symbolic computations, we chose random values for parameters. For the details of the computations, we refer to our {\sc Maple} code~\texttt{LV-Single-Output.mpl}\footnote{\url{https://github.com/rahkooy/PDE-Identifiability}}.

Finally, we also see that the parameters $c_1$ and $c_2$ are not identifiable using the following argument. Following  
Definition~\ref{def:idinbc}, let 
\begin{gather*}
\bk_1=(a_1,b_1,c_1,d_1,a_2,b_2,c_2,d_2),\ \ \bv_1=(u,v)\\ 
\bk_2=(a_1,b_1,\lambda c_1,d_1,a_2,b_2,\lambda c_2,d_2),\ \ \bv_2=(u,v/\lambda),
\end{gather*}
where $\lambda$ is an arbitrary nonzero number. One sees that, substituting $\bk_2$ and $\bv_2$ into~\eqref{eq:lv_single_out}, $\lambda$ will be eliminated and we will obtain~\eqref{eq:lv_single_out} again. However, $\bk_1$ and $\bk_2$ are not equal. This means that $\bk_1$ and $\bk_2$ are not identifiable. As $\bk_1$ and $\bk_2$ only differ in the values of $c_1$ and $c_2$, one can conclude that $c_1$ and $c_2$ are not identifiable. 

\end{example}

\begin{example}[A Reaction-Diffusion Model of Cancer Invasion]

Our final example from mathematical biology is a PDE model of cancer
invasion which couples two reaction-diffusion equations for the concentrations of tumour cells $v(x,t)$ and acid (or pH) $w(x,t)$
with a time-dependent ordinary differential equation for healthy cells $u(x,t)$. The healthy cells and the cancer cells undergo logistic growth and compete with each other for space. The healthy cells do not move but they are killed by acid which is produced by the tumour cells and undergoes natural decay while it diffuses through the domain. The diffusive movement of the cancer cells is assumed to be a linearly decreasing function of the concentration of healthy cells~\cite{gatenby1996reaction}. 
The model is given by the following three PDEs:
\begin{align}
  &\partial_t u(x,t) = r_1 u(x,t) \left(1 - \frac{u(x,t)}{k_1} - \frac{v(x,t)}{k_2} a_{12}\right) 
    - d_1 w(x,t) u(x,t), \label{eq:diffusion1}\\
  &\partial_t v(x,t) = r_2 v(x,t) \left(1 - \frac{v(x,t)}{k_2} - \frac{u(x,t)}{k_1} a_{21}\right) 
    + d_2 \partial_x \left( (1-u(x,t)) \partial_x v(x,t)\right), \label{eq:diffusion2}\\
  &\partial_tw(x,t)  = d_4 \partial_x^2 w(x,t) + r_3 v(x,t) - d_3
    w(x,t), \label{eq:diffusion3}
\end{align}
along with the output equations
\begin{align}
  y_1(x, t)& = u(x, t), \\
  y_2(x, t) &= v(x, t), \\
  y_3(x, t) & = w(x, t).
\end{align}
The parameters of the system are
$r_1,k_1,a_{12},k_2,d_1,r_2,a_{21},d_2,d_3,r_3,d_4$, the boundary conditions are given by
\begin{align}
  u(x,t)  & \to 
            \begin{cases}
              k_1, \quad x \to +\infty \\
              0, \quad x \to -\infty 
            \end{cases} \label{eq:diffusion-bc1}\\
  v(x,t) & \to
           \begin{cases}
             0, \quad x \to +\infty \\
             k_2, \quad x \to -\infty
           \end{cases} \label{eq:diffusion-bc2} \\
  w(x,t) & \to
           \begin{cases}
             0, \quad x \to +\infty \\
             \frac{k_2r_3}{d_3}, \quad x \to -\infty,
           \end{cases} \label{eq:diffusion-bc3}
\end{align}
and the initial conditions are given by prescribed $u(x,0)$ and
$v(x,0)$, e.g.,
\begin{align}
  w(x,0) & = 0 \label{eq:diffusion-ic1}\\
  u(x,0) & = \frac{k_1 e^{\gamma_1 x}}{1+e^{\gamma_1 x}} \label{eq:diffusion-ic2} \\
  v(x,0) & = \frac{k_2 e^{-\gamma_2 x}}{1+e^{-\gamma_2 x}}\cdot \label{eq:diffusion-ic3}
\end{align}
Our ground field is $\mathbb{K} = \RR$ and $\Omega = \RR_+^{11}$, corresponding to the number of  parameters.

Details of the computations can be found in the {\sc Maple} worksheet \texttt{reaction-diffusion.mpl}\footnote{\url{https://github.com/rahkooy/PDE-Identifiability}}.
  
We consider each of the three equations separately, computing three  Wronskians and checking if they are nonsingular. We note that for the third equation, as there are no monomials with coefficient 1, we divide the polynomial by $k_1 k_2$ so that it has a monomial with coefficient one. Below is the summary of our computations.

\begin{itemize}
   \item The first Wronskian yields 40 coefficients. Considering ten of the coefficients, 
    Rosenfeld-Gr\"obner results in the following conditions:
    \begin{equation}
        \partial_x u(x,t)=0 \quad \text{or} \quad v(x,t)=0.
    \end{equation}
    The above conditions can easily be refuted by looking at the boundary and initial conditions, using arguments similar to the Coupled Reaction-Diffusion equations.
    \item For the second equation, the normal form contains 8 coefficients. Our procedure results in the following conditions for Wronskian to be singular:
    \begin{equation}
      w(x, t) \frac{\partial w(x, t)}{\partial t\partial x} = \pdv{w(x, t)}{t} \pdv{w(x, t)}{x} \quad \text{or} \quad 
      w(x, t)=0 \quad \text{or} \quad 
     v(x, t)=0.
    \end{equation}
    The second and third conditions, i.e., $w(x, t) =0$ and $v(x,t)=0$ can be easily refuted using boundary and initial conditions. For the first condition, we show that one can solve using the specified initial condition and obtain only zero solution. More precisely, substituting~\eqref{eq:diffusion-ic1} in the first condition, one obtains that either $\pdv{w(x,t)}{x} = 0$ at $t = 0$ or $\pdv{w(x, t)}{t} = 0$ at $t = 0$. One can easily check that both of these contradict the initial and boundary conditions, as the parameters are not allowed to be zero.
    \item For the third equation, the first ten coefficients result in the following conditions
    \begin{equation}
    \pdv{u(x,t)}{x}=0 \quad \text{or} \quad v(x,t)=0
    \end{equation}
    The second condition ($v(x,t)=0$) can be trivially refuted. For the first conditions, adding $\pdv{u(x,t)}{x}$ to the system and computing Rosenfeld-Gr\"obner, we obtain 9 equations, three of which are the input-output equations, three are the partial derivatives of the output equations with respect to $x$, and the remaining three equations involve the parameters. The latter three are of interest to us. They show that the states are not constant with respect to $x$. One can check that the boundary and initial conditions would imply that several parameters are zero, which is not permitted according to our assumptions. This can simply be seen as the normal forms of $u, v$, and $w$ with respect to the Rosenfeld-Gr\"obner are zero.
\end{itemize}
\end{example}

\subsection{PDEs from applied mathematics}\label{subsec:applied-maths-pdes}
Finally, in this section, we consider two well-known PDEs in applied mathematics that model natural phenomena. 

\begin{example}\label{ex:elliptic-pde}
The following PDE is an example of an elliptic PDE.
\begin{equation}\label{eq:elliptic-pde}
\partial_x^2 u(x,y) + \partial_y^2 u(x,y) = \theta,    
\end{equation}
where $0 \leq x \leq L$ and $0 \leq y \leq H$. The boundary conditions are 
\begin{align}
    u(x,H) & = \frac{\alpha x}{L} \label{eq:elliptic-bc1} \\
    u(L,y) & = \frac{\alpha y}{H},
\end{align}
and the initial conditions are 
\begin{align}
    u(x,0) & = 0 \\
    u(0,y) & = 0.
\end{align}
The parameters of the system are $\theta$ and $\alpha$, and the domain for parameters is $\Omega=\RR_+^{2}$.

In order to remove the parameter $\alpha$ from the boundary condition, one can define a new variable $v := u / \alpha$. Then the new equation and boundary conditions will be 
\begin{equation}\label{eq:elliptic-pde-reparameter}
\alpha \partial_x^2 v(x,y) + \alpha \partial_y^2 v(x,y) = \theta,    
\end{equation}
the boundary conditions will be 
\begin{align}
    v(x,H) & = \frac{x}{L} \\
    v(L,y) & = \frac{y}{H},
\end{align}
and the initial conditions will be 
\begin{align}
    \alpha v(x,0) & = 0 \\
    \alpha v(0,y) & = 0.
\end{align}
Since $\alpha$ is assumed to be non-zero, the initial conditions become 
\begin{align}
     v(x,0) & = 0 \\
     v(0,y) & = 0.
\end{align}
Dividing both sides of~\eqref{eq:elliptic-pde-reparameter} by $\alpha$, we obtain $\partial_x^2 v(x,y) + \partial_y^2 v(x,y) = \frac{\theta}{\alpha}$. Hence, $\frac{\theta}{\alpha}$ is identifiable, as it is in terms of derivatives of $v(x,y)$, however, the parameters $\alpha$ and $\theta$ are not identifiable. For the special case of $\alpha=1$, the parameter $\theta$ will be identifiable.   
\end{example}
\begin{remark}
    Note that the current work does not address the general case of systems with parameters in the boundary conditions. This is a potential future work using prolongations as in SIAN~\cite{SIAN}.
\end{remark}

\begin{example}\label{{ex:hyperbolic-pde}}
The following PDE is the well-known \emph{wave equation} and is an example of a hyperbolic PDE.
\begin{equation}\label{eq:wave-equations}
    \partial_t^2u(x,t) - \sigma^2 \partial_x^2 u(x,t) =0,
\end{equation}
where $\sigma$, the parameter, is the wave speed.
The initial conditions are given by 
\begin{align}\label{eq:ic-hyperbolic}
     u(x,0) = L  - x, \quad \partial_t u(x,0) = \beta.
\end{align}
The output function is $y(x,t) = u(x,t)$.
By d'Alembert's formula~\cite[Theorem~2.15]{Olver},~\eqref{eq:wave-equations} and~\eqref{eq:ic-hyperbolic} define a unique solution
\begin{equation}\label{eq:hyperbolic-solution}
    y(x,t) = -x+\beta t+L.
\end{equation}

Following Definition~\ref{def:idinbc}, our ground field is $\mathbb{K} = \RR$ and $\Omega = \RR_+$. Let us now try to apply Algorithm~\ref{alg:strong} (Proposition~\ref{prop:identicbc}) to see that computing Wronskian is essential. 
The IO-equation is
\begin{equation}
\label{eq:IOwave}\partial_t^2y(x,t) - \sigma^2 \partial_x^2 y(x,t) =0.
\end{equation}
Of its two monomials, the coefficient of $\sigma^2 \partial_x^2y(x,t)$ is not one. The Wronskian of this monomial (with respect to both $x$ and $t$) is itself. 
Furthermore, this monomial $\partial_x^2y(x, t)$ vanishes on every solution~\eqref{eq:hyperbolic-solution}, so Algorithm~\ref{alg:strong} is not applicable here.
Finanlly, since the solution~\eqref{eq:hyperbolic-solution} depends only on the initial conditions and not on $\sigma$, $\sigma$ is nonidentifiable by Definition~\ref{def:idinbc}.
\end{example}

\printbibliography

\end{document}